\def\ddefloop#1{\ifx\ddefloop#1\else\ddef{#1}\expandafter\ddefloop\fi}
\def\ddef#1{\expandafter\def\csname bb#1\endcsname{\ensuremath{\mathbb{#1}}}}
\def\ddef#1{\expandafter\def\csname c#1\endcsname{\ensuremath{\mathcal{#1}}}}
\DeclareMathOperator*{\argmin}{arg\,min}
\DeclareMathOperator*{\argmax}{arg\,max}
\newcommand{\eps}{\ensuremath{\epsilon}}
\newcommand{\veps}{\ensuremath{\varepsilon}}
\newcommand\poly{\ensuremath{\operatorname{poly}}}
\newcommand\copt{\ensuremath{a}}
\newcommand\Copt{\ensuremath{\cA}}
\newcommand\Aopt{\ensuremath{A}}
\newcommand\cAopt{\ensuremath{\tilde{A}}}
\newcommand\greedy{\texttt{greedy}\xspace}
\newcommand\core{\ensuremath{\mathrm{core}}\xspace}
\newcommand\kmpp{\texttt{kmeans++}\xspace}
\def\select{\texttt{select}\xspace}
\def\selectpp{\texttt{select++}\xspace}
\def\selectball{\texttt{select-ball}\xspace}
\def\selectsgd{\texttt{select-sgd}\xspace}
\def\guessball{\texttt{guess-ball}\xspace}
\def\bfo{\mathbf{1}}
\def\Pr{\textup{Pr}}
\newcommand{\ip}[2]{\left\langle #1, #2 \right \rangle}
\newtheorem{theorem}{Theorem}[section]
\newtheorem{lemma}{Lemma}[section]
\newtheorem{condition}{Condition}
\crefname{condition}{condition}{conditions}
\crefname{claim}{claim}{claims}
\newcommand\bad{\ensuremath{\mathsf{Bad}}}
\newcommand\good{\ensuremath{\mathsf{Good}}}
\newcommand\constgood{\ensuremath{\kappa_{\operatorname{3}}}\xspace}
\newcommand\conststop{\ensuremath{\kappa_{\operatorname{4}}}\xspace}
\newcommand\constcore{\ensuremath{\kappa_{\operatorname{2}}}\xspace}
\newcommand\constcorelb{\ensuremath{\kappa_{\operatorname{1}}}\xspace}
\def\NP{NP}
\def\dist{D}
\renewcommand{\paragraph}{%
  \@startsection{paragraph}{4}%
  {\z@}{1ex \@plus 1ex \@minus .2ex}{-1em}%
  {\normalfont\normalsize\bfseries}%
}
\title{Greedy bi-criteria approximations for $k$-medians and $k$-means}
\author[1]{Daniel Hsu}
\author[2]{Matus Telgarsky}
\affil[1]{Columbia University, New York, NY}
\affil[2]{University of Michigan, Ann Arbor, MI}
\begin{document}
\maketitle

\begin{abstract}
  This paper investigates the following natural greedy procedure for
  clustering in the bi-criterion setting:
  iteratively grow a set of centers, in each round adding the center from a
  candidate set that maximally decreases clustering cost.
  In the case of $k$-medians and $k$-means, the key results are as follows.
  \begin{itemize}
    \item
      When the method considers all data points as candidate centers,
      then selecting $\mathcal{O}(k\log(1/\varepsilon))$ centers
      achieves cost at most $2+\varepsilon$ times the optimal cost with $k$ centers.
    \item
      Alternatively, the same guarantees hold
      if each round samples $\mathcal{O}(k/\varepsilon^5)$ candidate centers
      proportionally to their cluster cost
      (as with \texttt{kmeans++}, but holding centers fixed).
    \item
      In the case of $k$-means, considering an augmented set of
      $n^{\lceil1/\varepsilon\rceil}$
      candidate centers gives $1+\varepsilon$ approximation with
      $\mathcal{O}(k\log(1/\varepsilon))$ centers,
      the entire algorithm taking
      $\mathcal{O}(dk\log(1/\varepsilon)n^{1+\lceil1/\varepsilon\rceil})$ time, where $n$ is the number of
      data points in $\mathbb{R}^d$.
    \item
      In the case of Euclidean $k$-medians, generating a candidate set
      via $n^{\mathcal{O}(1/\varepsilon^2)}$ executions of stochastic
      gradient descent with adaptively determined
      constraint sets will once again give approximation $1+\varepsilon$ with
      $\mathcal{O}(k\log(1/\varepsilon))$ centers in
      $dk\log(1/\varepsilon)n^{\mathcal{O}(1/\varepsilon^2)}$ time.
  \end{itemize}
  Ancillary results include:
  guarantees for cluster costs based on powers of metrics;
  a brief, favorable empirical evaluation against \texttt{kmeans++};
  data-dependent bounds allowing $1+\varepsilon$ in the first two bullets above,
  for example with $k$-medians over finite metric spaces.
\end{abstract}

\thispagestyle{empty}
\newpage
\setcounter{page}{1}

\section{Introduction}

Consider the task of covering or clustering a set of $n$ points $X$
using centers from a set $\cY$.
A solution $C\subseteq \cY$ must balance two competing criteria:
its size $|C|$ should be small, as should its cost
\[
  \phi_X(C) \ := \ \sum_{x\in X} \min_{y\in C} \Delta(x,y) \,,
\]
where $\Delta : \cX \times \cY \to \bbR_+$ is a non-negative function defined on $\cY$ and a superset $\cX\supseteq X$.

Amongst many conventions for balancing these two criteria,
perhaps the most prevalent is to fix a reference solution $\Copt$ with $k := |\Copt|$ centers,
and to seek a solution $C$ which minimizes approximation ratio $\phi_X(C) / \phi_X(\Copt)$ while constraining $|C| = k$.
Problems of this type are generally \NP-hard: for example, the $k$-means
problem, where $\Delta(x,y) := \|x-y\|_2^2$ in Euclidean space, and the metric
$k$-medians problem, where $\Delta(x,y) := D(x,y)$ for some metric $D$ over finite
$\cX = \cY$, are each \NP-hard to approximate within some constant factor larger
than one~\citep{jain2002newgreedy,awasthi2015hardness}.

On the other hand,
if $|C|$ is allowed to slightly exceed $|\Copt|$,
the task of approximation becomes much easier.
Returning to the example of $k$-means,
for any $\veps > 0$,
increasing the center budget to $|C| \leq k\ln(1/\veps)$
grants the existence of algorithms with approximation factor $1+\veps$
while taking time polynomial in the size of the input $X$
\citep{makarychev2015bicriteria}.

The classical problem of set cover is similar:
it is \NP-hard, but its natural greedy algorithm finds a cover of size $|C| =
\lceil k \ln(n) \rceil$ whenever one of size $k$ exists~\citep{johnson1974approximation}.
The analogous greedy iterative procedure for $\phi_X$ --- which incrementally adds elements from $\cY$ to maximally decrease $\phi_X$ ---
is the basis of this paper and all its algorithms,
but with one twist: the set of centers in each round, $Y_i$,
is adaptively chosen by a routine called \select.
Instantiating \select in various ways yields the following results.

  \paragraph{Results for $k$-means.}
    This problem takes $\cX = \cY = \bbR^d$
    and $\Delta(x,y) = \|x-y\|_2^2$.
    \begin{itemize}[leftmargin=*]
      \item
        When \select returns all of $X$,
        then $\cO(k\log(1/\veps))$ centers suffice
        to achieve approximation factor $(1+\veps)(1+\constcorelb)$, where $\constcorelb \in [0,1]$
        is a problem-dependent constant (cf.~\Cref{fact:gkm}).
        By contrast, the main competing method \kmpp currently achieves approximation factor $2+\veps$
        with $\cO(k/\veps^2)$ centers \citep{aggarwal2009adaptive,wei2016constant}.
        (A lower bound on the number of centers in this regime is not known;
        when exactly $k$ centers are used, the approximation factor is below $\ln(k)$ only with exponentially
        small probability \citep{brunsch2013badinstance}.)
      \item
        When \select returns $\cO(k/\veps^5)$ points from $X$ subsampled similarly to \kmpp \citep{arthur2007kmeans++},
        once again $\cO(k \log(1/\veps))$ centers suffice
        but with a slightly worse approximation factor $(1+\veps)(1+\constcore)$, where $\constcore\in[\constcorelb, 1+\veps]$
        is another problem-dependent constant (cf.~\Cref{fact:gkm++}).
      \item
        When \select returns the means of all subsets of $X$ of size $\lceil 1/\veps \rceil$,
        then $\cO(k \log(1/\veps))$ centers suffice for approximation factor $(1+\veps)^2$ (cf.~\Cref{fact:kmeans-1+eps}).
        Thus the method requires time $\cO(kd\log(1/\veps)n^{1+\lceil 1/\veps\rceil})$,
        improving upon a running time $n^{\cO(\log(1/\veps)/\veps^2)}$ with
        $\cO(k\log(1/\veps))$ centers,
        due to \citet{makarychev2015bicriteria},
        whose algorithm randomly projects to $\cO(\log(n)/\veps^2)$ dimensions,
        then constructs extra candidate centers via a gridding argument
        \citep{matousek2000approximate},
        then rounds an LP solution,
        and finally lifts the resulting partition back to $\bbR^d$.
        The local search method analyzed by~\citet{bandyapadhyay2016variants}
        returns a solution with approximation factor $1+\veps$ using
        $(1+\veps)k$ centers, but its running time is exponential in
        $(1/\veps)^d$.
        Lastly, assuming the instance satisfies a certain separation condition with parameter
        $\kappa > 0$, a running time of $\cO(n^3)(k\log(n))^{\poly(1/\veps,1/\kappa)}$
        is also possible~\citep{awasthi2010stability}.
    \end{itemize}

  \paragraph{Results for generalized $k$-medians.}
    Variants of the preceding results hold in the following generalized setting,
    versions of which appear elsewhere in the literature
    \citep[e.g.,][]{arthur2007kmeans++}:
    $\Delta(x,y) = \dist(x,y)^p$ for metric $\dist$ on space $\cX = \cY$ (not necessarily infinite), and $p \geq 1$.
    \begin{itemize}[leftmargin=*]
      \item
        The earlier \Cref{fact:gkm,fact:gkm++} go through in this setting still with
        $\cO(k\log(1/\veps))$ centers,
        but respectively granting approximation ratios $(1+\constcorelb)^p$ and $(1+\constcore)^p$ (cf.~\Cref{fact:gkm,fact:gkm++}).
        A notable improvement in this regime is the case of $p=1$ with finite metrics, where $\constcorelb = 0$.
        An approximation factor of $1+\veps$ was obtained in prior work for finite metrics with exactly $k$ centers,
        however requiring separation with a parameter $\kappa > 0$,
        and with an algorithm whose running time is
        $(nk)^{\poly(1/\veps,1/\kappa)}$~\citep{awasthi2010stability}.
      \item
        Achieving approximation ratio $1+\veps$ with $\cO(k\log(1/\veps))$ centers is again
        possible when $D$ is induced by a norm in $\cX = \cY = \bbR^d$.
        If the norm is Euclidean and $p=1$, it suffices to generate candidate
        centers with $n^{3 + \lceil 1/\veps^2\rceil}$ executions of projected
        stochastic gradient descent with adaptively determined constraint
        sets (cf.~\Cref{fact:select-sgd});
        for other norms or exponents $p$, $\cO(n^3 \veps^{-d})$ candidate centers
        need to be sampled (cf.~\Cref{fact:gkm-ball}).
        Existing approximation schemes that only use $k$ centers (for either Euclidean $k$-medians and
        $k$-means) have complexity
        that is either exponential in
        $k$~\citep{kumar2004simple,kumar2005lineartime,feldman2007ptas} or more than exponential in
        $d$~\citep{kolliopoulos1999nearlylinear,cohenaddad2016localsearch,friggstad2016localsearch}.
    \end{itemize}

\paragraph{Related works.}
Analysis of greedy methods is prominently studied in the context of maximizing
submodular functions~\citep{nemhauser1978analysis}, and the recent literature
offers many techniques for efficient
implementation~\citep[e.g.,][]{ashwinkumar2014fast,buchbinder2015comparing}.
It is most natural to view the objective function to be minimized in the present
work as a supermodular function, as opposed to viewing its negation as a
submodular function.
These different viewpoints lead to different approximation results, even for the
same greedy scheme.
Moreover, the specific objective function considered in this work has additional
structure that permits computational speedups (cf.~\Cref{sec:gkm++}) not
generally available for other supermodular objectives.
A more detailed discussion is presented in \Cref{sec:supermod}.

In the context of Euclidean $k$-medians and $k$-means problems
(where $\cX = \cY = \bbR^d$ and $\Delta(x,y) = \norm{x-y}_2^p$ for $p\in\cbr[0]{1,2}$), the
present work is related to the study of bi-criteria approximation algorithms,
which find a solution using $\beta k$ centers whose cost is at most $\alpha$
times the cost of the best solution using $k$ centers.
For any $\veps \in (0,1)$, the factors $\alpha = 2+\veps$ and $\beta =
\poly(1/\veps)$ are achievable for both the $k$-medians
problem~\citep{lin1992approximation} and the $k$-means
problems~\citep{aggarwal2009adaptive,wei2016constant} by
$\poly(n,d,k,1/\veps)$-time algorithms that only select centers from among the
data points $X$.
The bi-criteria approximation methods of
\citet{makarychev2015bicriteria} and \citet{bandyapadhyay2016variants} for
$k$-means are
already discussed above, as are proper approximation schemes for $k$-medians
and $k$-means~\citep{kumar2004simple,kumar2005lineartime,feldman2007ptas,kolliopoulos1999nearlylinear,cohenaddad2016localsearch,friggstad2016localsearch}.

\paragraph{Organization.}
The generic greedy scheme is presented in \Cref{sec:greedy}.
Generalized $k$-medians problems are discussed in \Cref{sec:gkm}.
Lastly, experiments with $k$-means constitute \Cref{sec:exp}.

\paragraph{Notation.}
The set of positive integers $\cbr[0]{1,2,\dotsc,N}$ is denoted by $[N]$, and
$[x]_+ := \max\{0,x\}$ for $x \in \bbR$.
The reference solution $\Copt := \cbr[0]{ \copt_1, \copt_2, \dotsc, \copt_k }$
of cardinality $k$ is treated as fixed in each discussion, however only in some
circumstances is it optimal.
This solution $\Copt$ partitions $X$ into $\Aopt_1, \Aopt_2, \dotsc, \Aopt_k$,
where $\Aopt_j := \cbr[0]{ x \in X : \argmin_{j \in [k]} \Delta(x,\copt_j) = j
}$, breaking $\argmin$ ties using any fixed, deterministic rule.
Observe that $\phi_X(C) = \sum_{j=1}^k \phi_{\Aopt_j}(C)$ for any $C \subseteq
\cY$, and $\phi_X(\Copt) = \sum_{j=1}^k \sum_{x \in \Aopt_j} \Delta(x,\copt_j)$.
The mean of a finite subset $A \subseteq \bbR^d$ is denoted by $\mu(A) :=
\sum_{x \in A} x / |A|$.

\section{Greedy method}
\label{sec:greedy}

The greedy scheme is presented in \Cref{fig:alg}.
As discussed before, it greedily adds a new center in each round
so as to maximally decrease cost.  A routine \select provides the candidate
centers in each round, and the minimization over these candidates need only be solved to accuracy
$1+\tau$.

\begin{figure}[t]
  \begin{framed}
    {\centering\textbf{Algorithm} \greedy\par}
    \begin{itemize}[leftmargin=*,itemsep=0ex]
      \item[] \textbf{Input}:
        input points $X\subseteq \cX$,
        initial centers $C_0 \subseteq \cY$,
        number of iterations $t$,
        candidate selection procedure $\select$,
        tolerance $\tau \geq 0$.

      \item[] For $i = 1,2,\dotsc,t$:
        \begin{itemize}[itemsep=0ex]
          \item[$\bullet$]
            Choose candidate centers $Y_i := \select(X, C_{i-1})$.
          \item[$\bullet$]
            Set $C_i := C_{i-1} \cup \cbr[0]{c_i}$ for any $c_i \in Y_i$ that
            satisfies
            \[
              \phi_X\del[1]{C_{i-1} \cup \cbr[0]{c_i}}
              \ \leq \
              (1+\tau) \cdot \min_{c \in Y_i}
              \phi_X\del[1]{C_{i-1} \cup \cbr[0]{c}}
              \,.
            \]
        \end{itemize}
      \item[] \textbf{Output}: $C_t$.
    \end{itemize}
    \vspace{-4mm}
  \end{framed}
  \caption{Greedy algorithm for general $k$-medians problems.}
  \label{fig:alg}
\end{figure}

The bounds on \greedy will depend on one of two conditions being satisfied on $(C_{i-1}, Y_i)$ in each
round, at least with some probability.  These conditions are parameterized by an approximation factor $\gamma$.
In the sequel (e.g., generalized $k$-medians problems), the proofs will proceed by establishing one of these
conditions, and then directly invoke the guarantees on \greedy.

\begin{condition}
  \label{cond:core:1}
  For each $j \in [k]$, there exists $c \in Y_i$ such that
  $\phi_{\Aopt_j}(\cbr[0]{c}) \leq \gamma \cdot \phi_{\Aopt_j}(\cbr[0]{\copt_j})$.
\end{condition}

\begin{condition}
  \label{cond:core:2}
  There exists $c \in Y_i$ such that
  \begin{equation*}
    \max_{j\in[k]} \sbr{ \phi_{\Aopt_j}(C_{i-1}) - \min_{c\in Y_i}\phi_{\Aopt_j}(\{c\}) }_+
    \ \geq \
    \max_{j\in[k]}
    \sbr{
      \phi_{\Aopt_j}(C_{i-1}) -
      \gamma \cdot \phi_{\Aopt_j}(\cbr[0]{\copt_j})
    }_+
    \,.
  \end{equation*}
\end{condition}

Note that \Cref{cond:core:1} implies \Cref{cond:core:2}.

\begin{theorem}
  \label{cor:main}
  Let $\veps > 0$ and $\alpha \geq \gamma \geq 1$ be given,
  along with initial clustering $C_0$ satisfying $\phi_X(C_0) \leq \alpha \cdot \phi_X(\Copt)$,
  and lastly \greedy chooses $c_i\in Y_i$ with $\tau = 0$.
  If either
  \begin{itemize}
    \item[(1)]
      \Cref{cond:core:1} or \Cref{cond:core:2}
      hold for $(C_{i-1}, Y_i, \gamma)$ in each round,
      and $t \geq k\ln (( \alpha - \gamma ) / (\gamma \veps))$;
      or
    \item[(2)]
      \Cref{cond:core:1} or \Cref{cond:core:2}
      hold for $(C_{i-1}, Y_i, \gamma)$ conditionally independently with probability at least $\rho >0$ in each round,
      and $t \geq \max \{ k\ln (( \alpha - \gamma ) / (\gamma \veps)) / (2\rho), 2\ln(1/\delta)/\rho^2 \}$ for some $\delta > 0$;
  \end{itemize}
  then
     $\phi_X(C_t)
    \leq
    \gamma \cdot (1+\veps) \cdot \phi_X(\Copt)$
  holds unconditionally under the assumptions (1) above, and with probability at least $1-\delta$ under assumptions (2).
\end{theorem}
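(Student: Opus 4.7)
The plan is to track the potential $\Phi_i := \phi_X(C_i) - \gamma\,\phi_X(\Copt)$, for which $\Phi_0 \leq (\alpha-\gamma)\phi_X(\Copt)$ by hypothesis and whose target value $\Phi_t \leq \gamma\veps\,\phi_X(\Copt)$ is equivalent to the conclusion $\phi_X(C_t) \leq \gamma(1+\veps)\phi_X(\Copt)$. The whole argument rests on a per-round contraction lemma: in every round where \Cref{cond:core:2} (which is implied by \Cref{cond:core:1}) holds,
\[
  \Phi_i \;\leq\; \del{1 - \tfrac{1}{k}}\Phi_{i-1}.
\]

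To prove this contraction, I would first write the gain from adding any candidate $c \in Y_i$ as
\[
  \phi_X(C_{i-1}) - \phi_X(C_{i-1} \cup \{c\})
    \;=\; \sum_{j=1}^k \sbr{\phi_{\Aopt_j}(C_{i-1}) - \phi_{\Aopt_j}(\{c\})}_+,
\]
which follows from the $\Copt$-partition of $X$ together with the observation that a new center only helps on clusters where it strictly improves on the incumbent. \Cref{cond:core:2} supplies a cluster index $j^\star$ and a candidate $\tilde c \in Y_i$ for which $\sbr{\phi_{\Aopt_{j^\star}}(C_{i-1}) - \phi_{\Aopt_{j^\star}}(\{\tilde c\})}_+$ is at least $\max_j \sbr{\phi_{\Aopt_j}(C_{i-1}) - \gamma\phi_{\Aopt_j}(\{\copt_j\})}_+$; replacing this maximum by the average over $k$ clusters and then dropping the $[\cdot]_+$ (which only decreases the sum) lower-bounds it by $(1/k)(\phi_X(C_{i-1}) - \gamma\phi_X(\Copt)) = \Phi_{i-1}/k$. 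Since greedy with $\tau = 0$ picks a $c_i \in Y_i$ no worse than $\tilde c$, its realized gain is at least $\Phi_{i-1}/k$, which is the contraction.

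Part (1) then follows by iterating: $\Phi_t \leq (1-1/k)^t \Phi_0 \leq e^{-t/k}(\alpha-\gamma)\phi_X(\Copt)$, and the hypothesized $t \geq k\ln((\alpha-\gamma)/(\gamma\veps))$ makes the right side at most $\gamma\veps\,\phi_X(\Copt)$. For part (2), I would use that $\phi_X(C_i)$ is monotone non-increasing in $i$ (enlarging a center set never raises the cost), so $\Phi_i \leq \Phi_{i-1}$ always and the contraction merely fails to fire on ``bad'' rounds; writing $G$ for the number of good rounds in $t$ iterations, one still has $\Phi_t \leq (1-1/k)^G \Phi_0$. The conditional-independence hypothesis lets me couple the good-round indicators to an i.i.d.\ $\mathrm{Bernoulli}(\rho)$ sequence by stochastic dominance, after which a Hoeffding (or multiplicative Chernoff) bound combined with the two lower bounds on $t$ in hypothesis~(2) forces $G \geq k\ln((\alpha-\gamma)/(\gamma\veps))$ with probability at least $1-\delta$, completing the proof.

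The main obstacle is the per-round contraction: the decomposition of the greedy gain into a sum of non-negative per-cluster improvements is what licenses focusing on the single cluster $j^\star$ used to cash in \Cref{cond:core:2}, and the ``max-dominates-average'' inequality is what bridges the per-cluster excess quantities and the aggregate potential $\Phi_{i-1}$. Once that lemma is in place, both parts are routine: part (1) is a geometric-series calculation, and part (2) is a concentration argument whose only nontrivial ingredient is the monotonicity of $\Phi_i$, used to absorb the bad rounds without any adverse effect.
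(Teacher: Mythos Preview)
Your proposal is correct and follows essentially the same route as the paper: both establish the per-round contraction by lower-bounding the greedy gain via the single best cluster, then invoke \Cref{cond:core:2} and the max-$\geq$-average inequality to reach $(1/k)\Phi_{i-1}$; for part~(2) the paper uses Azuma where you use coupling-plus-Hoeffding, which is equivalent. One minor slip: your displayed decomposition $\phi_X(C_{i-1}) - \phi_X(C_{i-1}\cup\{c\}) = \sum_j [\phi_{\Aopt_j}(C_{i-1}) - \phi_{\Aopt_j}(\{c\})]_+$ is only an inequality ($\geq$), not an equality---adding $c$ can help individual points of $\Aopt_j$ even when it does not lower the whole-cluster cost $\phi_{\Aopt_j}(\{c\})$ below $\phi_{\Aopt_j}(C_{i-1})$---but since you only ever extract a single nonnegative term as a lower bound, the argument is unaffected.
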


The proof is an immediate consequence of the following more general \namecref{thm:main}.

\begin{lemma}
  \label{thm:main}
  If \Cref{cond:core:1} or \Cref{cond:core:2} are satisfied with some $\gamma\geq1$
  for $(C_{i-1}, Y_i)$ in each round
  and $\tau < 1/(k-1)$, then the set of representatives $C_t$
  returned by \greedy satisfies
  \begin{equation*}
    \phi_X(C_t)
    \ \leq \
    \del{ 1 - \frac1k }^s \cdot (1+\tau)^s \cdot \phi_X(C_0)
    + \del{ 1 - \del{1 - \frac1k}^s \cdot (1+\tau)^s } \cdot \gamma \cdot
    \frac{1+\tau}{1-(k-1)\tau} \cdot \phi_X(\Copt)
  \end{equation*}
  with $s = t$.  If instead \Cref{cond:core:1} or \Cref{cond:core:2} holds with probability
  at least $\rho$ conditionally independently across rounds, then this bound on $\phi_X(C_t)$ holds with probability at least $1-\delta$
  with $s = \lfloor t\rho - \sqrt{t \ln(1/\delta)/2}\rfloor$.
\end{lemma}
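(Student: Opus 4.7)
The plan is to derive a one-step contraction $\phi_X(C_i) \leq r\cdot\phi_X(C_{i-1}) + c$ in every round in which the condition is satisfied (``good'' rounds), with $r := (1+\tau)(1-1/k)$ and $c := (1+\tau)\gamma\phi_X(\Copt)/k$, to observe the trivial monotonicity $\phi_X(C_i)\leq\phi_X(C_{i-1})$ in every round (adding a center cannot increase cost), and to iterate both over $t$ rounds. After simplifying $1-r = (1-(k-1)\tau)/k$ (so that $c/(1-r) = \gamma(1+\tau)/(1-(k-1)\tau)\cdot\phi_X(\Copt)$ and $r^s = (1-1/k)^s(1+\tau)^s$), the unrolled bound matches the statement with $s$ equal to the number of good rounds. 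It then remains to handle the two cases for $s$: $s=t$ deterministically, or a concentration bound in the probabilistic case.

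For the per-round recursion, since \Cref{cond:core:1} implies \Cref{cond:core:2}, I work under the latter. The key averaging inequality,
\[
  \max_{j\in[k]}\sbr{\phi_{\Aopt_j}(C_{i-1}) - \gamma\cdot\phi_{\Aopt_j}(\cbr{\copt_j})}_+
  \ \geq\ \frac{1}{k}\sbr{\phi_X(C_{i-1}) - \gamma\cdot\phi_X(\Copt)}_+,
\]
follows from $\sum_j [x_j]_+ \geq [\sum_j x_j]_+$ together with max $\geq$ mean. Combined with \Cref{cond:core:2} this produces $j^\star\in[k]$ and $c^\star\in Y_i$ with $\phi_{\Aopt_{j^\star}}(C_{i-1}) - \phi_{\Aopt_{j^\star}}(\cbr{c^\star}) \geq (1/k)\sbr{\phi_X(C_{i-1}) - \gamma\phi_X(\Copt)}_+$. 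Adding $c^\star$ only decreases cost on each cluster and caps cluster $j^\star$'s cost at $\phi_{\Aopt_{j^\star}}(\cbr{c^\star})$, so $\phi_X(C_{i-1}\cup\cbr{c^\star}) \leq \phi_X(C_{i-1}) - (1/k)\sbr{\phi_X(C_{i-1})-\gamma\phi_X(\Copt)}_+$. The $(1+\tau)$-approximate greedy rule then yields $\phi_X(C_i)\leq(1+\tau)\del{\phi_X(C_{i-1}) - (1/k)\sbr{\phi_X(C_{i-1}) - \gamma\phi_X(\Copt)}_+}$, and a brief case analysis verifies that this is dominated by $r\phi_X(C_{i-1}) + c$ both when $\phi_X(C_{i-1})\geq\gamma\phi_X(\Copt)$ (direct algebra, with the $[\cdot]_+$ dropping out) and when $\phi_X(C_{i-1})<\gamma\phi_X(\Copt)$ (invoke the trivial $\phi_X(C_i)\leq\phi_X(C_{i-1}) < \gamma\phi_X(\Copt) \leq c/(1-r)$, which then forces $\phi_X(C_i)\leq r\phi_X(C_{i-1})+c$).

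Iteration is then immediate by induction on $t$: letting $s_t$ count good rounds up to $t$, bad rounds preserve the inductive bound via $\phi_X(C_i)\leq\phi_X(C_{i-1})$, and good rounds telescope as
\[
  r\sbr{r^{s_{t-1}}\phi_X(C_0) + \frac{c(1-r^{s_{t-1}})}{1-r}} + c
  \ =\ r^{s_t}\phi_X(C_0) + \frac{c(1-r^{s_t})}{1-r},
\]
the hypothesis $\tau<1/(k-1)$ ensuring $r<1$. Under assumptions~(1), $s_t = t$. Under assumptions~(2), the good-round indicators have conditional expectation at least $\rho$; Azuma--Hoeffding applied to the associated bounded martingale differences gives $s_t \geq t\rho - \sqrt{t\ln(1/\delta)/2}$ with probability at least $1-\delta$, and monotonicity of the bound in $s_t$ (via $r<1$) combined with integrality of $s_t$ yield $s = \lfloor t\rho - \sqrt{t\ln(1/\delta)/2}\rfloor$.

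The main obstacle is the good-round argument: pairing the $[\cdot]_+$ structure of \Cref{cond:core:2} with the averaging step, and closing the small-cost edge case cleanly, so that a single linear recursion in $\phi_X(C_{i-1})$ emerges uniformly across regimes. Everything after Step~1 is routine unrolling and a textbook martingale concentration bound.
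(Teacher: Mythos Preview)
Your proposal is correct and follows essentially the same route as the paper: establish the one-step recurrence $\phi_X(C_i)\leq (1-1/k)(1+\tau)\phi_X(C_{i-1}) + (\gamma/k)(1+\tau)\phi_X(\Copt)$ on good rounds via \Cref{cond:core:2} plus max $\geq$ mean, use monotonicity on the remaining rounds, unroll, and apply Azuma for the probabilistic case. The only difference is that you retain the $[\cdot]_+$ through to the end and then close it with a two-case analysis, whereas the paper drops it in the last step via $\sum_j[x_j]_+\geq\sum_j x_j$ (without $[\cdot]_+$ on the sum), obtaining the linear recursion directly and avoiding your case split; this is a cosmetic rather than a substantive distinction.
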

\begin{proof}
  First consider any pair $(C_{i-1}, Y_i)$ satisfying \Cref{cond:core:1} or \Cref{cond:core:2},
  which simply means \Cref{cond:core:2} holds.
  Then
  \begin{align*}
    \phi_X(C_{i-1}) - \frac{\phi_X(C_i)}{1+\tau}
    & \ \geq \
    \phi_X(C_{i-1}) - \min_{c \in Y_i} \phi_X(C_{i-1} \cup \cbr[0]{c})
    && \text{(definition of $C_i$)}
    \\
    & \ \geq \
    \max_{c \in Y_i}
    \max_{j \in [k]}
    \phi_{\Aopt_j}(C_{i-1}) - \phi_{\Aopt_j}(C_{i-1} \cup \cbr[0]{c})
    \\
    & \ \geq \
    \max_{j \in [k]}
    \sbr{
      \phi_{\Aopt_j}(C_{i-1}) -
      \min_{c \in Y_i} \phi_{\Aopt_j}(\cbr[0]{c})
    }_+
    \\
    & \ \geq \
    \max_{j \in [k]}
    \sbr{
      \phi_{\Aopt_j}(C_{i-1}) -
      \gamma \cdot \phi_{\Aopt_j}(\cbr[0]{\copt_j})
    }_+
    && \text{(\Cref{cond:core:2})}
    \\
    & \ \geq \
    \frac1k \sum_{j=1}^k
    \sbr{
      \phi_{\Aopt_j}(C_{i-1}) -
      \gamma \cdot \phi_{\Aopt_j}(\cbr[0]{\copt_j})
    }_+
    \\
    & \ \geq \
    \frac1k
    \del{
      \phi_X(C_{i-1}) - \gamma \cdot \phi_X(\Copt)
    }
    \,.
  \end{align*}
  Rearranging the inequality gives the recurrence inequality
  \begin{equation}
    \phi_X(C_i)
    \ \leq \
    \del{ 1 - \frac1k } \cdot (1+\tau) \cdot \phi_X(C_{i-1})
    + \frac{\gamma}{k} \cdot (1+\tau) \cdot \phi_X(\Copt)
    \,.
    \label{eq:thm:main:1}
  \end{equation}
  Now let $(B_0, \ldots, B_{s-1})$ denote the subsequence of $(C_0, \ldots, C_{t-1})$
  where the corresponding pairs $(C_{i-1}, Y_i)$
  satisfy \Cref{cond:core:1} or \Cref{cond:core:2}.
  Since $\phi_X$ can not increase on rounds where neither condition holds,
  it still follows that
  \begin{equation*}
    \phi_X(B_i)
    \ \leq \
    \del{ 1 - \frac1k } \cdot (1+\tau) \cdot \phi_X(B_{i-1})
    + \frac{\gamma}{k} \cdot (1+\tau) \cdot \phi_X(\Copt)
    \,,
  \end{equation*}
  and therefore
  \begin{align*}
    \phi_X(B_t)
    & \ \leq \
    \del{ 1 - \frac1k }^s \cdot (1+\tau)^s \cdot \phi_X(B_0)
    + \sum_{i=0}^{s-1} \del{ 1 - \frac1k }^i \cdot (1+\tau)^i
    \cdot \frac{\gamma}{k} \cdot \phi_X(\Copt)
    \,.
  \end{align*}
  If the conditions hold for every round, then $s=t$ and the proof is done since $\tau < 1/(k-1)$.
  Otherwise, it remains to bound $s$;
  but since the conditions hold on a given round with probability at least $\rho$
  conditionally independently of previous rounds, it follows by Azuma's inequality that
    $\Pr\sbr[0]{ s \leq t\rho - \sqrt{t\ln(1/\delta)/2} }
    \leq \exp\del[0]{-2t (\ln(1/\delta)/(2t))}
    \leq \delta$
  as desired.
\end{proof}

\paragraph{Connection to supermodular and submodular optimization.}
\Cref{cor:main} recovers the analysis of the standard greedy method for set
cover~\citep{johnson1974approximation}:
if $X$ is a set of points and $\cY$ is a family of subsets of $X$, then the choices
$\Delta(x, S) := 1 + \bfo[x \not \in S]$ and $Y_i = \cY = \select()$ satisfy
\Cref{cond:core:1} with $\gamma = 1$.
A valid cover has cost $n$; \greedy (with $C_0 = \emptyset$, $\veps=1/n$)
finds a valid cover with cardinality $\leq k\ln(n)$ when one of cardinality $k$
exists.

More generally, the behavior of \greedy on the objective $\phi_X$ --- when $Y_i
= \cY$ for each $i$ --- is well understood, because $\phi_X$ is a monotone
(non-increasing) supermodular function.
Indeed, the results of \citet{nemhauser1978analysis} show that monotonicity and
supermodularity of $\phi_X$, together, imply the key recurrence
\cref{eq:thm:main:1} in the proof of \Cref{thm:main} in the case $\gamma = 1$.
Typically,
this \greedy algorithm is analyzed for the clustering problem
by regarding the function $f(S) := \phi_X(\cbr[0]{c_0}) - \phi_X(S
\cup \cbr[0]{c_0})$ as a submodular objective to be
maximized~\citep[e.g.,][]{mirzasoleiman2013distributed}; here $c_0 \in \cY$ is
some distinguished center fixed \emph{a priori}.
The results from this form of analysis are generally incomparable to those
obtained in the present work.
More details are given in \Cref{sec:supermod}.

\section{Generalized $k$-medians problems}
\label{sec:gkm}

The results of this section will specialize $\cX$, $\cY$, $\Copt$, and $\Delta$ in the following two ways.

The first setting, \emph{generalized $k$-medians}, is as follows.
There is a single ambient space $\cX = \cY$ for data points and centers, and
every data point is a possible center $X \subseteq \cY$.
Associated with this space is a distance function
$\dist: \cY \times \cY \to \bbR_+$
satisfying symmetry $\Delta(x,y) = \Delta(y,x)$ and the triangle inequality $\dist(x,z) \leq \dist(x,y) + \dist(y,z)$.
Lastly define $\Delta(x,y) := \dist(x,y)^p$ for some real number exponent $p \geq 1$.

Secondly, the distinguished sub-cases of \emph{Euclidean $k$-medians} and
\emph{$k$-means} are as follows.
The ambient space $\cX = \cY = \bbR^d$ is $d$-dimensional Euclidean space.
For Euclidean $k$-medians, $\Delta(x,y) = \|x-y\|_2$; for $k$-means,
$\Delta(x,y) = \|x-y\|_2^2$.
Moreover, for $k$-means, it is assumed (without loss of generality) that the
reference solution $\Copt$ satisfies $\copt_j = \mu(\Aopt_j)$ for each
$j\in[k]$.

Associated with each generalized $k$-medians instance is a real number $q\geq 1$;
for $k$-means instances, $q=1$, but otherwise $q=p$.
Additionally, define the normalized cost
\[
  \psi_{A}(C) \ := \ \del{ \phi_A(C) / |A| }^{1/q} \,.
\]
This normalization is convenient in the proofs, but is not used in the main theorems.
Lastly, all invocations of \greedy in this section set the tolerance parameter as $\tau = 0$.

All results in this section will assume an $\alpha$-approximate initialization $C_0$.
There exist easy methods attaining this approximation guarantee for $\alpha=\cO(1)$ with $|C_0|=\cO(k)$,
for instance \kmpp (cf.~\Cref{fact:km++}).

The basic approximation guarantee will make use of the following data-dependent quantity:
\[
  \constcorelb
  \ := \ \max\cbr{ \min_{x\in \Aopt_j}
  \frac{\psi_{\cbr[0]{x}}(\cbr[0]{\copt_j})}{\psi_{\Aopt_j}(\cbr[0]{\copt_j})} : j \in [k], |\Aopt_j| > 0 }
  \,.
\]
Note that $\constcorelb \leq 1$ in general, but \constcorelb can easily be smaller; for instance
$\constcorelb = 0$ with finite metrics.

\begin{theorem}
  \label{fact:gkm}
  Consider an instance of the generalized $k$-medians problem in $(\cX,D)$ with exponent $p$.
  Let $\veps > 0$ be given, along with $C_0$ with $\phi_X(C_0) \leq \alpha(1+\constcorelb)^q \phi_X(\Copt)$ for some $\alpha \geq 1$.
  Suppose \greedy is run for
  $t\geq k \ln ((\alpha-1) / \veps)$ rounds with $X = Y_i = \select()$ in each round.
  Then the resulting centers $C_t$
  satisfy
  \[
    \phi_X(C_t) \ \leq \ (1+\constcorelb)^q (1+\veps) \phi_X(\Copt)
  \]
  where $\constcorelb \in [0,1]$.
\end{theorem}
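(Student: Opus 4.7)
The plan is to reduce \Cref{fact:gkm} to part~(1) of \Cref{cor:main} by verifying that \Cref{cond:core:1} holds deterministically in every round with $\gamma := (1+\constcorelb)^q$. Once this is done, I would invoke \Cref{cor:main} with the substitution $\alpha' := \alpha(1+\constcorelb)^q$: the initialization hypothesis $\phi_X(C_0)\le\alpha(1+\constcorelb)^q\phi_X(\Copt)$ is literally $\phi_X(C_0)\le\alpha'\phi_X(\Copt)$, the required chain $\alpha'\ge\gamma\ge 1$ follows from $\alpha\ge 1$, and the iteration count $t\ge k\ln((\alpha-1)/\veps)$ equals $k\ln((\alpha'-\gamma)/(\gamma\veps))$ after the factors of $(1+\constcorelb)^q$ cancel in numerator and denominator. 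The theorem's conclusion $\phi_X(C_t)\le\gamma(1+\veps)\phi_X(\Copt)$ then reads exactly as the statement of \Cref{fact:gkm}.

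To verify \Cref{cond:core:1}, fix a nonempty cluster $\Aopt_j$. By definition of $\constcorelb$ there is a witness $x_j\in\Aopt_j$ with $\psi_{\{x_j\}}(\{\copt_j\})\le\constcorelb\cdot\psi_{\Aopt_j}(\{\copt_j\})$, and since $Y_i=X\supseteq\Aopt_j$ this $x_j$ is a valid candidate regardless of $C_{i-1}$. What remains is to promote this single-point inequality to the cluster-wide comparison $\phi_{\Aopt_j}(\{x_j\})\le(1+\constcorelb)^q\phi_{\Aopt_j}(\{\copt_j\})$. In the generalized $k$-medians setting ($q=p$, so $\psi_{\{x_j\}}(\{\copt_j\})=D(x_j,\copt_j)$), I would combine the triangle inequality $D(x_j,y)\le D(x_j,\copt_j)+D(\copt_j,y)$ with Minkowski's inequality in $\ell^p$ over $y\in\Aopt_j$ to obtain
\[
\psi_{\Aopt_j}(\{x_j\}) \ \le\ D(x_j,\copt_j) + \psi_{\Aopt_j}(\{\copt_j\}) \ \le\ (1+\constcorelb)\,\psi_{\Aopt_j}(\{\copt_j\}),
\]
and raise both sides to the $q$-th power. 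In the $k$-means sub-case ($q=1$, $\copt_j=\mu(\Aopt_j)$, and $\psi_{\{x_j\}}(\{\copt_j\})=\|x_j-\copt_j\|_2^2$), the bias-variance identity $\phi_{\Aopt_j}(\{x_j\}) = \phi_{\Aopt_j}(\{\copt_j\}) + |\Aopt_j|\|x_j-\copt_j\|_2^2$ together with $\|x_j-\copt_j\|_2^2\le\constcorelb\cdot\phi_{\Aopt_j}(\{\copt_j\})/|\Aopt_j|$ gives the desired inequality immediately.

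There is no deep obstacle in this proof; it is essentially a bookkeeping exercise that isolates the right $\gamma$ and $\alpha'$ and then lifts a per-point bound to a per-cluster bound. The only delicate point is the dual role played by the exponent through the normalization $\psi_A(C)=(\phi_A(C)/|A|)^{1/q}$: in the metric case the convention $q=p$ makes $\psi_{\{x\}}(\{\copt_j\})$ a raw distance, which matches the triangle inequality; for $k$-means the convention $q=1$ turns it into a squared distance, which is the quantity produced by Pythagoras. Tracking this distinction is what allows the single definition of $\constcorelb$ to drive both subcases and leads directly to the stated approximation factor $(1+\constcorelb)^q(1+\veps)$.
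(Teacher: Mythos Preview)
Your proposal is correct and follows essentially the same approach as the paper: verify \Cref{cond:core:1} with $\gamma=(1+\constcorelb)^q$ and then invoke \Cref{cor:main}. The only cosmetic difference is that the paper isolates your triangle-inequality/Minkowski and bias--variance computations into a standalone lemma (\Cref{fact:Cp}) and cites it, whereas you inline the same argument; your explicit $\alpha' = \alpha(1+\constcorelb)^q$ bookkeeping for the application of \Cref{cor:main} is exactly what the paper leaves implicit.
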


The key to the proof is the following property of generalized $k$-medians problems,
which implies \Cref{cond:core:1} holds in every round of \greedy.
This inequality generalizes the usual bias-variance equality for $k$-means problems;
a similar inequality appeared without \constcorelb for a slightly less general setting
in \citep{arthur2007kmeans++}.

\begin{lemma}[{See also \citep[Lemmas 3.1 and 5.1]{arthur2007kmeans++}}]
  \label{fact:Cp}
  Let a generalized $k$-medians problem be given.
  Then for any $j\in [k]$ and $y \in \cY$,
  \[
    \psi_{\Aopt_j}(\cbr{y}) \ \leq \ \psi_{\Aopt_j}(\cbr{{\copt_j}}) +
    \psi_{\cbr{y}}(\cbr{\copt_j}) \,,
  \]
  and moreover
  \[
    \min_{y \in {\Aopt_j}} \psi_{\Aopt_j}(\cbr{y})
    \ \leq \ (1+\constcorelb) \psi_{\Aopt_j}(\cbr{{\copt_j}})
    \ \leq \ 2 \psi_{\Aopt_j}(\cbr{{\copt_j}}) \,.
  \]
\end{lemma}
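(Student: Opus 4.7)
The plan is to prove the triangle-like bound first, splitting into the two cases determined by $q$, and then derive the min-over-$\Aopt_j$ bound as an easy corollary using the definition of $\constgood$ from the excerpt (here I mean $\constcorelb$).

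For the first inequality, when $q=p$ (the generalized $k$-medians case that is not $k$-means), I would start from the pointwise metric triangle inequality $\dist(x,y) \leq \dist(x,\copt_j) + \dist(\copt_j,y)$ for $x\in \Aopt_j$, raise both sides to the $p$-th power, average over $\Aopt_j$, and then take the $p$-th root. Minkowski's inequality applied on the discrete $L^p$ space on $\Aopt_j$ (with the normalized counting measure) gives
\[
\Bigl(\tfrac{1}{|\Aopt_j|}\sum_{x\in \Aopt_j} \dist(x,y)^p\Bigr)^{1/p}
\ \leq \ \Bigl(\tfrac{1}{|\Aopt_j|}\sum_{x} \dist(x,\copt_j)^p\Bigr)^{1/p} + \dist(\copt_j,y),
\]
which, since $q=p$ makes $\psi_{\{y\}}(\{\copt_j\})=\dist(y,\copt_j)^{p/q}=\dist(\copt_j,y)$, is exactly the claimed bound. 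When $q=1$ and $p=2$ ($k$-means) with the assumption $\copt_j=\mu(\Aopt_j)$, I would instead invoke the classical bias-variance identity
\[
\sum_{x\in \Aopt_j}\|x-y\|_2^2 \ = \ \sum_{x\in \Aopt_j}\|x-\copt_j\|_2^2 + |\Aopt_j|\,\|y-\copt_j\|_2^2,
\]
divide by $|\Aopt_j|$, and observe that this is $\psi_{\Aopt_j}(\{y\}) = \psi_{\Aopt_j}(\{\copt_j\}) + \psi_{\{y\}}(\{\copt_j\})$ (in fact as an equality, since $\psi_{\{y\}}(\{\copt_j\})=\|y-\copt_j\|_2^2$ when $q=1$, $p=2$).

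For the min-over-$\Aopt_j$ inequality, I would choose the specific $y\in \Aopt_j$ that attains $\min_{x\in \Aopt_j}\psi_{\{x\}}(\{\copt_j\})$, apply the just-proved first inequality, and then use the definition of $\constcorelb$ to pass from $\psi_{\{y\}}(\{\copt_j\})$ to $\constcorelb\cdot \psi_{\Aopt_j}(\{\copt_j\})$. To conclude $\constcorelb\leq 1$, it suffices to note that for each $j$ with $|\Aopt_j|>0$ the minimum of the nonnegative values $\dist(x,\copt_j)^{p/q}$ over $x\in \Aopt_j$ is bounded above by any of their means, and in particular by $\psi_{\Aopt_j}(\{\copt_j\})$, which in the $q=p$ case is the $L^p$ mean of $\dist(x,\copt_j)$ and in the $k$-means case is the arithmetic mean of $\|x-\copt_j\|_2^2$.

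No step is genuinely hard; the main bookkeeping pitfall is keeping the exponents straight, since $\psi$ involves $1/q$ while the underlying cost has exponent $p$, so that $\psi_{\{y\}}(\{\copt_j\})=\dist(y,\copt_j)^{p/q}$ means different things in the two cases ($\dist(y,\copt_j)$ when $q=p$, and $\|y-\copt_j\|_2^2$ when $q=1$, $p=2$). Splitting the proof cleanly at the outset between these two cases avoids any need to interpolate between them.
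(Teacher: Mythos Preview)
Your proposal is correct and follows essentially the same route as the paper: the first inequality is proved by the bias-variance identity in the $k$-means case and by the metric triangle inequality followed by Minkowski in the $q=p$ case, and the second inequality follows by plugging in the minimizer of $\psi_{\{x\}}(\{\copt_j\})$ and invoking the definition of $\constcorelb$. Your brief justification that $\constcorelb\leq 1$ via ``minimum is at most the relevant mean'' is a small elaboration the paper leaves implicit, but otherwise the arguments coincide.
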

\begin{proof}
  The second bound is implied by the first,
  since the choice $z := \argmin_{y\in {\Aopt_j}} \psi_{\cbr{y}}({\copt_j})$
  satisfies $\psi_{\cbr{z}}({\copt_j}) = \constcorelb \psi_{\Aopt_j}(\cbr{\copt_j})$ where $\constcorelb \leq 1$,
  and so
  \begin{align*}
    \min_{y \in {\Aopt_j}} \psi_{\Aopt_j}(\cbr{y})
    & \ = \
    \psi_{\Aopt_j}(\cbr{z})
    \ \leq \
    \psi_{\Aopt_j}(\cbr{{\copt_j}}) + \psi_{\cbr{z}}(\cbr{\copt_j})
    \ \leq \ (1+\constcorelb) \psi_{\Aopt_j}(\cbr{\copt_j})
    \ \leq \ 2 \psi_{\Aopt_j}(\cbr{\copt_j}) \,.
  \end{align*}

  For the first bound,
  the special case of $k$-means follows from the standard bias-variance equality (cf.~\Cref{fact:bias-var}).
  Otherwise, $q = p \geq 1$,
  and the triangle inequality for $\dist$
  together with Minkowski's inequality (applied in $\bbR^{|{\Aopt_j}|}$ with counting measure) implies
  \begin{align*}
    \phi_{\Aopt_j}(\cbr{y})^{1/p}
    & \, \leq \,
    \del{ \sum_{x\in {\Aopt_j}}\del{\dist(x,{\copt_j}) + \dist({\copt_j},y)}^p}^{1/p}
    \, \leq \,
    \del{ \sum_{x\in {\Aopt_j}}\dist(x,{\copt_j})^p}^{1/p}
    + \del{ \sum_{x\in {\Aopt_j}}\dist({\copt_j},y)^p}^{1/p}
    \,.
  \end{align*}
  Dividing both sides by $|\Aopt_j|^{1/p}$ gives the bound.
\end{proof}

\begin{proof}[Proof of \Cref{fact:gkm}]
  By \Cref{fact:Cp},
  every round of \greedy satisfies \Cref{cond:core:1} with $\gamma \leq (1+\constcorelb)^q$.
  The result now follows by \Cref{cor:main}.
\end{proof}

\subsection{Reducing computational cost via random sampling}
\label{sec:gkm++}

One drawback of setting $Y_i = X$ as in \Cref{fact:gkm} is computational cost:
\greedy must compute $\phi_X(C_{i-1}\cup \cbr{c})$ for each $c\in X$.
One way to improve the running time, not pursued here, is to speed up $\phi_X$
via subsampling and other approximate distance computations
\citep{andoni2008near,feldman2011unified}.
Separately, and this approach comprises this subsection: the size of $Y_i$ can be made independent of $|X|$.
This is achieved via a random sampling scheme similar to \kmpp \citep{arthur2007kmeans++}, but repeatedly sampling
many new centers given the same fixed set of prior centers.

This random sampling scheme also has a data-dependant quantity.
Unfortunately \constcorelb is unsuitably small in general,
as it only guarantees the existence of \emph{one} good center in $X$:
instead, there needs to be a reasonable number of needles in the hay.
To this end, given $\varepsilon > 0$, define
\begin{align*}
  \core(\Aopt_j; \kappa)
  & \ := \ \cbr{ x\in \Aopt_j : \psi_{\cbr{x}}(\cbr{\copt_j}) \leq \kappa \psi_{\Aopt_j}(\cbr{\copt_j}) }
  \,,
  \\
  \constcore
  & \ := \ \inf\cbr{ \kappa\geq 0 : \forall j \in [k] \centerdot |\core(\Aopt_j; \kappa)| \geq \veps |\Aopt_j| / (1+\veps)}
  \,,
  \\
  \cAopt_j
  & \ := \ \core(\Aopt_j; \constcore) \,.
\end{align*}
The quantity \constcore will capture problem adaptivity in the main bound below.
By \Cref{fact:gkm++:misc}, $\constcorelb \leq \constcore \leq (1+\veps)^{1/q}$.

\begin{theorem}
  \label{fact:gkm++}
  Let $\veps > 0$ and $\delta > 0$ be given,
  along with $C_0$ with $\phi_X(C_0) \leq \alpha(1+\constcore)^q(1+\veps)^{q-1} \phi_X(\Copt)$ for some $\alpha \geq 1$.
  Suppose \greedy is run for
  $t\geq 4k \ln ((\alpha  - 1)/ \veps) + 8 \ln(1/\delta)$ rounds where $Y_i$ is chosen by the following scheme:
  \begin{quote}
    \selectpp: return $4k((1+\veps)/\veps)^{q+4})$ samples according to $\Pr[x] \propto \Delta(x, C_{i-1})$.
  \end{quote}
  Then with probability at least $1-\delta$, the resulting centers $C_t$
  satisfy
  \[
    \phi_X(C_t) \ \leq \ (1+\constcore)^q (1+\veps)^q \phi_X(\Copt)
  \]
  where $\constcore \in [\constcorelb, (1+\veps)^{1/q}]$.
\end{theorem}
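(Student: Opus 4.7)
The approach is to invoke the probabilistic half~(2) of \Cref{cor:main} with the choice $\gamma := (1+\constcore)^q(1+\veps)^{q-1}$, so that $\gamma(1+\veps)$ matches the stated approximation factor $(1+\constcore)^q(1+\veps)^q$. The initialization hypothesis $\phi_X(C_0) \leq \alpha\gamma\phi_X(\Copt)$ and the round budget $t \geq 4k\ln((\alpha-1)/\veps) + 8\ln(1/\delta)$ then fall into the form required by \Cref{cor:main}~(2) (using $\gamma \geq 1$ to absorb $(\alpha-\gamma)/(\gamma\veps)$ into $(\alpha-1)/\veps$), provided we can exhibit a constant $\rho$---expected to come out near $1/2$, matching the stated numerical factors $4$ and $8$---such that, conditionally on $C_{i-1}$, \Cref{cond:core:2} is satisfied with probability at least $\rho$ over the randomness of \selectpp.

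Fix a round $i$ and let $j^\star$ attain the outer maximum on the right-hand side of \Cref{cond:core:2}. The key observation is that \Cref{fact:Cp}, together with the defining bound $\psi_{\{y\}}(\{\copt_{j^\star}\}) \leq \constcore\,\psi_{\Aopt_{j^\star}}(\{\copt_{j^\star}\})$ of the core, gives for every $y \in \cAopt_{j^\star}$
\[
  \phi_{\Aopt_{j^\star}}(\{y\}) \ \leq \ \bigl((1+\constcore)\psi_{\Aopt_{j^\star}}(\{\copt_{j^\star}\})\bigr)^q \ = \ (1+\constcore)^q\phi_{\Aopt_{j^\star}}(\{\copt_{j^\star}\}) \ \leq \ \gamma\,\phi_{\Aopt_{j^\star}}(\{\copt_{j^\star}\})\,,
\]
so \Cref{cond:core:2} is established as soon as $Y_i$ intersects $\cAopt_{j^\star}$. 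Each draw of \selectpp lands in $\cAopt_{j^\star}$ with probability $\phi_{\cAopt_{j^\star}}(C_{i-1})/\phi_X(C_{i-1})$, so the standard estimate $(1-p)^m \leq e^{-mp}$ shows that $m = 4k((1+\veps)/\veps)^{q+4}$ independent draws succeed with probability at least $1-e^{-1}$ provided this ratio is at least $(4k)^{-1}(\veps/(1+\veps))^{q+4}$.

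The crux is this lower bound on the fractional $D^p$-mass of $\cAopt_{j^\star}$. I would first dispose of the trivial case $\phi_X(C_{i-1}) \leq (1+\veps)\gamma\phi_X(\Copt)$, in which the right-hand side of \Cref{cond:core:2} is zero and the condition is vacuous. Otherwise an averaging argument gives $\phi_{\Aopt_{j^\star}}(C_{i-1}) - \gamma\phi_{\Aopt_{j^\star}}(\{\copt_{j^\star}\}) \geq (\phi_X(C_{i-1}) - \gamma\phi_X(\Copt))/k \geq \veps\phi_X(C_{i-1})/((1+\veps)k)$, contributing one of the $(1+\veps)/\veps$ factors and the $1/k$ factor in the sample size. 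To then pass from $\Aopt_{j^\star}$ to $\cAopt_{j^\star}$, I would expand $\psi_{\{x\}}(C_{i-1})$ for $x \in \cAopt_{j^\star}$ via the $\psi$-triangle inequality as $\psi_{\{x\}}(\{\copt_{j^\star}\}) + \psi_{\{\copt_{j^\star}\}}(C_{i-1})$, invoke the defining core bound together with the counting bound $|\cAopt_{j^\star}| \geq \veps|\Aopt_{j^\star}|/(1+\veps)$, and then raise back to the $\phi$-scale. The genuine technical obstacle is bookkeeping the $(1+\veps)^q$ factors that accumulate whenever the $\psi$-triangle inequality is raised to the $q$-th power; these are exactly what produce the $q+4$ exponent in the sample size. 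The overall scheme is a $D^p$-sampling argument in the spirit of \citet{arthur2007kmeans++}, except that a successful round only requires \emph{one of many} abundant core points to be drawn rather than a single especially lucky sample, which is precisely what lets the per-cluster approximation factor tighten from the usual constant $2^q$ to the data-dependent $(1+\constcore)^q$.
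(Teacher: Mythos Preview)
Your overall plan---invoke \Cref{cor:main}(2) with $\gamma=(1+\constcore)^q(1+\veps)^{q-1}$, show each \selectpp draw lands in $\cAopt_{j^\star}$ with probability at least $(4k)^{-1}(\veps/(1+\veps))^{q+4}$, then boost by independence---is precisely the paper's route (its \Cref{fact:gkm++:1} together with \Cref{fact:cond_boost}). Two points deserve correction.

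\textbf{Minor.} When $\phi_X(C_{i-1})\le(1+\veps)\gamma\,\phi_X(\Copt)$, \Cref{cond:core:2} is \emph{not} vacuous: its right side is a cluster-wise maximum and can be strictly positive even though the sum $\sum_j\bigl(\phi_{\Aopt_j}(C_{i-1})-\gamma\phi_{\Aopt_j}(\{\copt_j\})\bigr)$ is small. The paper's disposal of this case is simply that once $\phi_X(C_{i-1})$ drops below $(1+\veps)\gamma\,\phi_X(\Copt)$ it stays there by monotonicity, so the desired conclusion holds directly without any appeal to \Cref{cond:core:2}.

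\textbf{Substantive.} The step ``pass from $\Aopt_{j^\star}$ to $\cAopt_{j^\star}$'' needs a \emph{lower} bound on $\phi_{\cAopt_{j^\star}}(C_{i-1})$, but the $\psi$-triangle expansion you write, $\psi_{\{x\}}(C_{i-1})\le\psi_{\{x\}}(\{\copt_{j^\star}\})+\psi_{\{\copt_{j^\star}\}}(C_{i-1})$, points the wrong way. What is actually required is the reverse inequality $D(x,C_{i-1})\ge D(\copt_{j^\star},C_{i-1})-D(x,\copt_{j^\star})$, and for that to be useful one must first know that $\copt_{j^\star}$ is far from every current center. This is the paper's \Cref{fact:badcost}: from $\psi_{\Aopt_{j^\star}}(C_{i-1})>(1+\veps)(1+\constcore)\,\psi_{\Aopt_{j^\star}}(\{\copt_{j^\star}\})$ and \Cref{fact:Cp} one extracts $\psi_{\{\hat c\}}(\{\copt_{j^\star}\})\ge\bigl((1+\veps)(1+\constcore)-1\bigr)\psi_{\Aopt_{j^\star}}(\{\copt_{j^\star}\})>\constcore\,\psi_{\Aopt_{j^\star}}(\{\copt_{j^\star}\})$ for every $\hat c\in C_{i-1}$; only then does the reverse triangle inequality combined with the core bound give a strictly positive lower bound on each $D(x,C_{i-1})$ for $x\in\cAopt_{j^\star}$. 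Neither the core bound nor the counting bound $|\cAopt_{j^\star}|\ge\veps|\Aopt_{j^\star}|/(1+\veps)$ supplies this ingredient, and without it the ratio $\phi_{\cAopt_{j^\star}}(C_{i-1})/\phi_{\Aopt_{j^\star}}(C_{i-1})$ cannot be bounded away from zero. This lemma---not the bookkeeping of $(1+\veps)^q$ factors---is the conceptual heart of the conditional-probability estimate (the paper's \Cref{lemma:badcoremass}), and it is absent from your sketch.
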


The key to the proof is \Cref{fact:gkm++:1},
showing $(C_{i-1}, Y_i)$ satisfies \Cref{cond:core:2} with high probability.
In order to prove this, the following tools are adapted from a
high probability analysis of \kmpp due to \citet{aggarwal2009adaptive};
the full proof of \Cref{fact:gkm++:misc} can be found in \Cref{sec:kmpp}.

\begin{lemma}
  \label{fact:gkm++:misc}
  Consider any iteration $i$.
  \begin{enumerate}
    \item
      $\constcorelb \leq \constcore \leq (1+\veps)^{1/q}$
    \item
      If $y \in \cAopt_j$,
      then $\psi_{\Aopt_j}(C_{i-1}\cup \{y\}) \leq (1+\constcore)\psi_{\Aopt_j}(\Copt)$.
    \item
      If $\psi_{\Aopt_j}(C_{i-1}) > (1+\veps)(1 + \constcore) \psi_{\Aopt_j}(\Copt)$,
      then every $y \in Y_i$ satisfies $\Pr[ y \in \cAopt_j | y \in \Aopt_j ] \geq (\veps/(1+\veps))^{q+3}/4$.
  \end{enumerate}
\end{lemma}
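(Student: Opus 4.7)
The three parts are handled in turn, and only part 3 is substantive.

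For part 1, the lower bound $\constcorelb \leq \constcore$ is immediate: for any $\kappa$ strictly below the maximum defining $\constcorelb$, some index $j$ has $\core(\Aopt_j;\kappa)$ empty, which violates the core-size constraint in the definition of $\constcore$. The upper bound $\constcore \leq (1+\veps)^{1/q}$ is Markov's inequality applied to the nonnegative quantity $\psi_{\{x\}}(\copt_j)^q = \Delta(x,\copt_j)$ under the uniform distribution over $\Aopt_j$; its mean equals $\psi_{\Aopt_j}(\copt_j)^q$, so at least a $\veps/(1+\veps)$ fraction of $\Aopt_j$ lies in $\core(\Aopt_j;(1+\veps)^{1/q})$.

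For part 2, the core definition gives $\psi_{\{y\}}(\copt_j) \leq \constcore\,\psi_{\Aopt_j}(\copt_j)$, and \Cref{fact:Cp} promotes this to $\psi_{\Aopt_j}(\{y\}) \leq (1+\constcore)\psi_{\Aopt_j}(\copt_j)$. Since adjoining centers to the argument only decreases $\phi_{\Aopt_j}$, and since $\psi_{\Aopt_j}(\Copt) = \psi_{\Aopt_j}(\{\copt_j\})$ (because $\Aopt_j$ is the cluster assigned to $\copt_j$), the bound extends to $C_{i-1} \cup \{y\}$.

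Part 3 is the main obstacle. Write $R := \psi_{\Aopt_j}(C_{i-1})$, $r := \psi_{\Aopt_j}(\Copt)$, and $R_j := \dist(\copt_j, C_{i-1})$; since a \selectpp sample $y$ satisfies $\Pr[y=x] \propto \Delta(x,C_{i-1})$, the conditional probability of interest equals $\phi_{\cAopt_j}(C_{i-1})/\phi_{\Aopt_j}(C_{i-1})$, with denominator $|\Aopt_j| R^q$. My plan is threefold. First, derive $R_j \geq R^{q/p} - r^{q/p}$: for generalized $k$-medians this follows from Minkowski applied to $\dist(x, C_{i-1}) \leq \dist(x, \copt_j) + R_j$, while for $k$-means it drops out of the bias-variance identity applied to the point of $C_{i-1}$ closest to $\copt_j = \mu(\Aopt_j)$. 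Second, combine with $\dist(y,\copt_j) \leq \constcore^{q/p} r^{q/p}$ for $y \in \cAopt_j$ and the reverse triangle inequality to obtain
\[
  \dist(y, C_{i-1}) \;\geq\; R^{q/p} - (1 + \constcore^{q/p})\, r^{q/p} \,,
\]
which the hypothesis $R > (1+\veps)(1+\constcore)\, r$ forces to be at least a constant multiple of $R^{q/p}\cdot \veps/(1+\veps)$. Third, raise to the $p$-th power, sum over $y \in \cAopt_j$, and invoke $|\cAopt_j| \geq \veps|\Aopt_j|/(1+\veps)$ to extract the claimed $(\veps/(1+\veps))^{q+3}/4$ after absorbing small absolute constants.

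The chief bookkeeping difficulty lies in the $k$-means subcase with $q\neq p$, where the difference of $p$-th roots must be rationalised via $\sqrt{a}-\sqrt{b} = (a-b)/(\sqrt{a}+\sqrt{b})$, together with a bound like $\sqrt{a}+\sqrt{b} \leq 2\sqrt{2R}$, to verify positivity. This is also exactly why the hypothesis inflates $\psi_{\Aopt_j}(\Copt)$ by the full factor $(1+\veps)(1+\constcore)$ rather than merely $1+\veps$: without the $(1+\constcore)$ slack, the difference above can become negative once $\constcore$ approaches its upper bound $(1+\veps)^{1/q}$.
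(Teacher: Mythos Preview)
Parts 1 and 2 of your plan are correct and match the paper's argument essentially verbatim.

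Part 3 has the right skeleton --- write the conditional probability as $\phi_{\cAopt_j}(C_{i-1})/\phi_{\Aopt_j}(C_{i-1})$, lower-bound each $\dist(y,C_{i-1})$ for $y\in\cAopt_j$ via the triangle inequality through $\copt_j$, and then compare to $R$ --- but the specific inequality you commit to in step~1 is too weak for the $k$-means subcase. Bias-variance actually gives $R_j^2 \geq R - r$, i.e.\ $R_j \geq \sqrt{R-r}$; you then discard this in favour of the uniform statement $R_j \geq R^{q/p}-r^{q/p} = \sqrt{R}-\sqrt{r}$. With that weakening, step~2 yields $\dist(y,C_{i-1}) \geq \sqrt{R}-(1+\sqrt{\constcore})\sqrt{r}$, and under the hypothesis $R>(1+\veps)(1+\constcore)r$ this quantity can be negative: take $\constcore=1$ (which is admissible since $\constcore\leq(1+\veps)^{1/q}=1+\veps$), $r=1$, $\veps$ small, and $R$ just above $2(1+\veps)$; then $\sqrt{R}<2=(1+\sqrt{\constcore})\sqrt{r}$. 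Rationalisation cannot rescue this --- it rewrites the difference but does not change its sign.

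The fix is minor: for $k$-means keep the sharper $R_j\geq\sqrt{R-r}$ and rationalise $\sqrt{R-r}-\sqrt{\constcore\,r} = \bigl(R-(1+\constcore)r\bigr)/\bigl(\sqrt{R-r}+\sqrt{\constcore\,r}\bigr)$, whose numerator exceeds $R\veps/(1+\veps)$ by hypothesis and whose denominator is at most $2\sqrt{R}$. The paper sidesteps the issue altogether by normalising to $R_j$ rather than $R$: it first proves (\Cref{fact:badcost}) that $r \leq R_j^{p/q}/(\constgood-1)$ with $\constgood=(1+\veps)(1+\constcore)$, so that $\dist(y,\copt_j)\leq R_j\,(\constcore/(\constgood-1))^{q/p}$ and hence $\dist(y,C_{i-1})\geq R_j\bigl(1-(\constcore/(\constgood-1))^{q/p}\bigr)$, which is manifestly positive since $\constcore<\constgood-1$. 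The denominator is handled symmetrically via $R\leq(1+1/(\constgood-1))R_j^{p/q}$. Either route reaches the stated bound, but only after you repair step~1.
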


\begin{lemma}
  \label{fact:gkm++:1}
  Suppose $\phi_X(C_{i-1}) > \gamma(1+\veps)\phi_{X}(\cbr{\Copt})$
  where $\gamma := (1+\veps)^{q-1}(1+\constcore)^q$.
  Then every element $c\in Y_i$ as chosen by \selectpp in \Cref{fact:gkm++}
  satisfies \Cref{cond:core:2} with constant $\gamma$ with probability at least $(\veps/(1+\veps))^{q+4}/(4k)$.
\end{lemma}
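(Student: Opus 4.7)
The plan is to show that a single sample $c$ drawn from the \selectpp distribution $\Pr[x]\propto\Delta(x,C_{i-1})$ provides a witness to \Cref{cond:core:2} with probability at least $(\veps/(1+\veps))^{q+4}/(4k)$. Define
\[
  j^\star \ := \ \argmax_{j \in [k]} \sbr{\phi_{\Aopt_j}(C_{i-1}) - \gamma\,\phi_{\Aopt_j}(\cbr{\copt_j})}_+
\]
and let $R$ denote this maximum value (which is positive since $\phi_X(C_{i-1}) > \gamma\phi_X(\Copt)$). I will establish (i) that the event $\{c\in\cAopt_{j^\star}\}$ already suffices to witness \Cref{cond:core:2}, and (ii) that this event has the required probability.

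For (i), if $c \in \cAopt_{j^\star}$ then by the definition of the core, $\psi_{\cbr{c}}(\cbr{\copt_{j^\star}}) \leq \constcore\,\psi_{\Aopt_{j^\star}}(\cbr{\copt_{j^\star}})$. Applying the generalized triangle inequality of \Cref{fact:Cp} gives $\psi_{\Aopt_{j^\star}}(\cbr{c}) \leq (1+\constcore)\psi_{\Aopt_{j^\star}}(\cbr{\copt_{j^\star}})$; raising to the power $q$ and using $\gamma = (1+\veps)^{q-1}(1+\constcore)^q \geq (1+\constcore)^q$ yields $\phi_{\Aopt_{j^\star}}(\cbr{c}) \leq \gamma\,\phi_{\Aopt_{j^\star}}(\cbr{\copt_{j^\star}})$, so that $\phi_{\Aopt_{j^\star}}(C_{i-1}) - \phi_{\Aopt_{j^\star}}(\cbr{c}) \geq R$, which is precisely what \Cref{cond:core:2} demands.

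For (ii), write $\Pr[c\in\cAopt_{j^\star}] = \Pr[c\in\Aopt_{j^\star}]\cdot\Pr[c\in\cAopt_{j^\star}\mid c\in\Aopt_{j^\star}]$. The first factor equals $\phi_{\Aopt_{j^\star}}(C_{i-1})/\phi_X(C_{i-1})$; the argmax property together with the standard mean-versus-max bound gives
\[
  \phi_{\Aopt_{j^\star}}(C_{i-1}) \ \geq \ R \ \geq \ \frac{1}{k}\sum_{j=1}^k\sbr{\phi_{\Aopt_j}(C_{i-1}) - \gamma\,\phi_{\Aopt_j}(\cbr{\copt_j})}_+ \ \geq \ \frac{1}{k}\del{\phi_X(C_{i-1}) - \gamma\,\phi_X(\Copt)}\,,
\]
and the standing hypothesis rearranges to $\phi_X(C_{i-1}) - \gamma\phi_X(\Copt) > \veps\,\phi_X(C_{i-1})/(1+\veps)$, so $\Pr[c\in\Aopt_{j^\star}] > \veps/(k(1+\veps))$. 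The second factor is handled by \Cref{fact:gkm++:misc}(3) applied at $j=j^\star$, yielding $(\veps/(1+\veps))^{q+3}/4$. Multiplying the two bounds produces the claimed $(\veps/(1+\veps))^{q+4}/(4k)$.

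The delicate step is invoking \Cref{fact:gkm++:misc}(3) at $j^\star$, whose precondition $\phi_{\Aopt_{j^\star}}(C_{i-1}) > \gamma(1+\veps)\phi_{\Aopt_{j^\star}}(\cbr{\copt_{j^\star}})$ is strictly stronger than what the argmax property alone provides---the argmax could in principle sit at a cluster whose individual ratio only barely exceeds $\gamma$. I plan to close this gap by restricting the argmax defining $j^\star$ to the set $\cJ := \cbr{j : \phi_{\Aopt_j}(C_{i-1}) > \gamma(1+\veps)\phi_{\Aopt_j}(\cbr{\copt_j})}$, which is nonempty by a pigeonhole argument on the hypothesis, and then rechecking that the averaging step in (ii) survives the restriction---if necessary settling for the slightly weaker ``mean'' version $\max_j[\phi_{\Aopt_j}(C_{i-1}) - \min_{c}\phi_{\Aopt_j}(\cbr{c})]_+ \geq \tfrac{1}{k}(\phi_X(C_{i-1}) - \gamma\phi_X(\Copt))$, which is all the recurrence in the proof of \Cref{thm:main} actually requires. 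Reconciling the two competing demands on $j^\star$ (argmax of $R$ versus membership in $\cJ$) is, I expect, the main technical hurdle.
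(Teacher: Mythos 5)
Your high-level plan mirrors the paper's own proof: fix the argmax cluster $m$ of the surplus $\phi_{\Aopt_j}(C_{i-1})-\gamma\phi_{\Aopt_j}(\Copt)$, use $\Pr[x]\propto\Delta(x,C_{i-1})$ to lower-bound $\Pr[c\in\Aopt_m]$ via a mean-versus-max inequality, and then compose with \Cref{fact:gkm++:misc}(3). The $\Pr[c\in\Aopt_m]>\veps/(k(1+\veps))$ computation you give is exactly the one in the paper, and part (i) of your plan is also how the paper establishes \Cref{cond:core:2} once a core point has been hit.

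What you flag as ``the main technical hurdle'' is, in fact, a genuine gap, and it is present in the paper's proof too. The paper asserts that the argmax-of-surplus cluster $m$ must satisfy $\phi_{\Aopt_m}(C_{i-1})>\gamma(1+\veps)\phi_{\Aopt_m}(\Copt)$, ``since otherwise $\sum_j(\phi_{\Aopt_j}(C_{i-1})-\gamma\phi_{\Aopt_j}(\Copt))\leq 0$, a contradiction.'' That implication is false. Take $k=2$, $\gamma=1$, $\veps=0.1$, with $\phi_{\Aopt_1}(\Copt)=10$, $\phi_{\Aopt_1}(C_{i-1})=10.99$, $\phi_{\Aopt_2}(\Copt)=1$, $\phi_{\Aopt_2}(C_{i-1})=1.2$: the hypothesis $\phi_X(C_{i-1})=12.19>12.1=\gamma(1+\veps)\phi_X(\Copt)$ holds, the argmax is $m=1$ with surplus $0.99>0.2$, yet $\phi_{\Aopt_1}(C_{i-1})=10.99<11=\gamma(1+\veps)\phi_{\Aopt_1}(\Copt)$, so the precondition of \Cref{fact:gkm++:misc}(3) (equivalently, $m$ being a ``bad'' cluster) fails even though the displayed sum is strictly positive. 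So your instinct that ``the argmax could in principle sit at a cluster whose individual ratio only barely exceeds $\gamma$'' is correct, and the paper's contradiction argument does not close that door.

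Your proposed remedy --- restricting the argmax to $\cJ:=\{j:\phi_{\Aopt_j}(C_{i-1})>\gamma(1+\veps)\phi_{\Aopt_j}(\Copt)\}$, which is nonempty by pigeonhole --- has the right idea, but the averaging step does not survive cleanly, as you suspected: in the example above with $\phi_{\Aopt_1}(C_{i-1})=1100$ (so $1\notin\cJ$, $2\in\cJ$), the best $\cJ$-cluster's surplus is $0.2$ while $\frac{1}{k}(\phi_X(C_{i-1})-\gamma\phi_X(\Copt))$ is far larger. The repair that works is to accept an extra factor of $(1+\veps)$: take $m:=\argmax_j\phi_{\Aopt_j}(C_{i-1})-\gamma(1+\veps)\phi_{\Aopt_j}(\Copt)$ (equivalently, run your $\cJ$-restricted argument but only claim a surplus $\geq\frac{1}{k}(\phi_X(C_{i-1})-\gamma(1+\veps)\phi_X(\Copt))$). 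Then $m\in\cJ$ automatically, \Cref{fact:gkm++:misc}(3) applies, and one establishes \Cref{cond:core:2} with the larger constant $\gamma':=\gamma(1+\veps)$; running the hypothesis one rung higher, i.e.\ $\phi_X(C_{i-1})>\gamma'(1+\veps)\phi_X(\Copt)$, recovers the $\veps/(k(1+\veps))$ sampling bound. Feeding $\gamma'$ into \Cref{cor:main} yields a final approximation of $(1+\constcore)^q(1+\veps)^{q+1}$ rather than $(1+\constcore)^q(1+\veps)^q$, a cosmetic weakening. So: your plan is essentially the paper's plan, you correctly caught a bug the paper glosses over, and the ``hurdle'' you anticipated is real but resolvable at the price of one more $(1+\veps)$ factor.
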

\begin{proof}
  Fix any cluster $\Aopt_m$ satisfying
  \[
    m \ := \ \argmax_{j\in [k]} \phi_{\Aopt_j}(C_{i-1}) - \gamma
    \phi_{\Aopt_j}(\Copt) \,.
  \]
  This $\Aopt_m$ must satisfy $\phi_{\Aopt_m}(C_{i-1}) > \gamma(1+\veps)\phi_{\Aopt_m}(\Copt)$
  and thus $\psi_{\Aopt_m}(C_{i-1}) > (1+\veps)(1+\constcore)\phi_{\Aopt_m}(\Copt)$,
  since otherwise
  \[
    \phi_X(C_{i-1}) - \gamma \phi_X(\Copt)
    \ = \ \sum_j \phi_{\Aopt_j}(C_{i-1}) - \gamma \phi_{\Aopt_j}(\Copt)
    \ \leq \ 0 \,,
  \]
  a contradiction.

  Observe that the probability of sampling a center $c$ from $\Aopt_m$ is
  \begin{align*}
    \Pr[ c \in \Aopt_m ]
    & \ = \ \frac {\phi_{\Aopt_m}(C_{i-1}) - \gamma \phi_{\Aopt_m}(\Copt) + \gamma \phi_{\Aopt_m}(\Copt)}{\phi_X(C_{i-1})}
    \\
    & \ = \ \frac {\max_j\left(\phi_{\Aopt_j}(C_{i-1}) - \gamma \phi_{\Aopt_j}(\Copt)\right) + \gamma\phi_{\Aopt_m}(\Copt)}{\phi_X(C_{i-1})}
    \\
    & \ \geq \ \frac {k^{-1} \sum_j\left(\phi_{\Aopt_j}(C_{i-1}) - \gamma\phi_{\Aopt_j}(\Copt)\right) + \gamma\phi_{\Aopt_m}(\Copt)}{\phi_X(C_{i-1})}
    \\
    & \ \geq \ \frac 1 k \left( 1 - \frac {\gamma\phi_X(\Copt)}{\phi_X(C_{i-1})}\right)
    \ > \ \frac 1 k \left( 1 - \frac {1}{1+\veps}\right)
    \ = \ \frac {\veps}{k(1+\veps)} \,.
  \end{align*}
  Since additionally the conclusion of \Cref{fact:gkm++:misc} holds due to the above calculation,
  \[
    \Pr[ c \in \cAopt_m ]
    \ = \ \Pr[ c \in \cAopt_m | c \in \Aopt_m ] \Pr[ c \in \Aopt_m ]
    \ \geq \
    \frac {1}{4k}\del{\frac {\veps}{1+\veps}}^{q+4} \ =: \ p_0 \,.
  \]
  Moreover, by \Cref{fact:gkm++:misc}, $c \in \Aopt_m$ implies
  \[
    \phi_{\Aopt_m}(C_{i-1}\cup\{c\})
    \ \leq \ (1+\constcore)^q \phi_{\Aopt_m}(\Copt)
    \ \leq \ \gamma \phi_{\Aopt_m}(\Copt)\,,
  \]
  and \Cref{cond:core:2} holds with probability at least $p_0$ since
  \begin{align*}
    \max_{j\in[k]} \left[ \phi_{\Aopt_j}(C_{i-1}) - \min_{c\in Y_i}\phi_{\Aopt_j}(\{c\}) \right]_+
    \ \geq \ & \left[ \phi_{\Aopt_m}(C_{i-1}) - \min_{c\in Y_i}\phi_{\Aopt_m}(\{c\}) \right]_+
    &\textup{(since $m\in [k]$)}
    \\
    \ \geq \ & \left[ \phi_{\Aopt_m}(C_{i-1}) - \gamma \phi_{\Aopt_m}(\{\copt_m\}) \right]_+
    &\textup{(by choice of $Y_i$)}
    \\
    \ = \ & \max_{j\in[k]} \left[ \phi_{\Aopt_k}(C_{i-1}) -
    \gamma\phi_{\Aopt_j}(\{\copt_j\}) \right]_+
    &\textup{(by choice of $m$)}
    \,.
    & \quad\ \ \qedhere
  \end{align*}
\end{proof}

The proof of \Cref{fact:gkm++} concludes by noting the success probability of \Cref{cond:core:2} is boosted with repeated sampling,
and thereafter invoking \Cref{cor:main}.
\begin{lemma}
  \label{fact:cond_boost}
  If a single sample from distribution $\cD$ satisfies \Cref{cond:core:2}
  with probability at least $\rho > 0$,
  then sampling $\lceil 1 / \rho \rceil$ points iid from $\cD$ satisfies \Cref{cond:core:2}
  with probability at least $1-1/e$.
\end{lemma}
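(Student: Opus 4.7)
The plan is to exploit the fact that \Cref{cond:core:2} is monotone in the candidate set $Y_i$: enlarging $Y_i$ can only decrease $\min_{c \in Y_i}\phi_{\Aopt_j}(\{c\})$, which can only increase each bracketed term on the left-hand side of the condition, while the right-hand side does not depend on $Y_i$ at all. Hence, if any subset $Y' \subseteq Y_i$ already witnesses the inequality, then so does $Y_i$ itself.

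With this monotonicity in hand, the argument is short. First I would formalize the hypothesis by letting $E$ denote the event that a singleton $Y_i = \{c\}$ drawn from $\cD$ satisfies \Cref{cond:core:2}, so $\Pr[E] \geq \rho$ by assumption. Then I would draw $N := \lceil 1/\rho \rceil$ independent samples $c_1,\dots,c_N$ from $\cD$ and let $Y_i := \{c_1,\dots,c_N\}$. By the monotonicity observation, it suffices to show that with probability at least $1 - 1/e$ some singleton $\{c_l\}$ already satisfies the condition, since then $Y_i \supseteq \{c_l\}$ does as well.

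The remaining step is the standard boosting calculation: by independence, the probability that every one of the $N$ singletons fails is at most $(1-\rho)^N \leq (1-\rho)^{1/\rho}$, which is bounded by $1/e$ using the elementary inequality $\ln(1-\rho) \leq -\rho$ for $\rho \in (0,1]$. Taking complements gives the desired $1-1/e$ bound.

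There is essentially no obstacle here; the only point worth flagging is the slightly unusual syntactic form of \Cref{cond:core:2}, whose existential prefix $``\exists c \in Y_i"$ does not actually bind any free variable on either side of the inequality, so the condition really is a property of the \emph{set} $Y_i$ alone. Once that observation is made explicit, the monotonicity in $Y_i$ and the $(1-\rho)^{1/\rho} \leq 1/e$ bound combine to give the lemma in a few lines.
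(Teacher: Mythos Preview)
Your argument is correct and is exactly the intended one; the paper in fact states this lemma without proof, treating the $(1-\rho)^{\lceil 1/\rho\rceil}\leq 1/e$ boosting and the monotonicity of \Cref{cond:core:2} in $Y_i$ as immediate. Your observation about the vacuous existential prefix is also accurate and worth noting.
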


\begin{proof}[Proof of \Cref{fact:gkm++}]
  If $\phi_X(C_i) \leq \gamma(1+\veps)\phi_{X}(\Copt)$ for any $i$, then it holds for all $j\geq i$.
  Thus suppose $\phi_X(C_i) \leq \gamma(1+\veps)\phi_{X}(\Copt)$ for all $i$;
  by \Cref{fact:gkm++:1} and \Cref{fact:cond_boost} and since \Cref{cond:core:1} implies \Cref{cond:core:2},
  then \Cref{cond:core:2} holds in every iteration each with probability at least $1/2$,
  and the result follows by \Cref{cor:main}.
\end{proof}

\subsection{Approximation ratios close to one}
\label{sec:not-a-ptas}

The previous settings only achieved approximation ratio $1+\veps$ when $X$ and
$\Copt$ allowed it (e.g., when $\constcorelb$ and $\constcore$ were small).
This subsection will cover three settings, each with $\cX = \cY = \bbR^d$, and three
corresponding choices for \select giving $1+\veps$ in general.
The first method is for $k$-means $\Delta(x,y) = \|x-y\|_2^2$,
the second for vanilla Euclidean $k$-medians $\Delta(x,y) = \|x-y\|_2$,
and the last for $\Delta(x,y) = \|x-y\|^p$ for any norm and $p\geq 1$.
Throughout this subsection, it is required that
$\cA$ denotes an optimal solution.

First, in the special case of $k$-means,
it suffices for \select to return a \emph{single non-adaptive set} of size $n^{\lceil 1/\veps\rceil}$
in each round.

\begin{theorem}
  \label{fact:kmeans-1+eps}
  Consider an instance of the $k$-means problem.
  Let $\veps \in (0,1)$ be given, along with $C_0$ with $\phi_X(C_0) \leq
  \alpha(1+\veps) \phi_X(\Copt)$ for some $\alpha \geq 1$.
  Suppose \greedy is run for $t\geq k \ln ((\alpha-1)/\veps)$ rounds using a
  \select routine that always returns the same subset $\bar Y$ defined by
  $\bar Y := \cbr[0]{ \sum_{i=1}^{\lceil1/\veps\rceil} x_i / \lceil1/\veps\rceil
  : x_1, x_2, \dotsc, x_{\lceil1/\veps\rceil} \in X }$.
  Then the resulting centers $C_t$ satisfy
  \[
    \phi_X(C_t) \ \leq \ (1+\veps)^2 \cdot \phi_X(\Copt) \,.
  \]
\end{theorem}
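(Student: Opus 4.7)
The plan is to reduce immediately to \Cref{cor:main} with $\gamma = 1+\veps$. The hypothesis on $C_0$ reads $\phi_X(C_0) \leq \alpha(1+\veps)\phi_X(\Copt)$, so I treat the initial approximation ratio as $\alpha' := \alpha(1+\veps)$. Then the iteration count in the theorem matches
\[
  k\ln\del{\frac{\alpha(1+\veps)-(1+\veps)}{(1+\veps)\veps}} \ = \ k\ln\del{\frac{\alpha-1}{\veps}} \ = \ k\ln\del{\frac{\alpha'-\gamma}{\gamma\veps}},
\]
so once I know that \Cref{cond:core:1} holds with $\gamma = 1+\veps$ in every round, part (1) of \Cref{cor:main} will yield $\phi_X(C_t) \leq \gamma(1+\veps)\phi_X(\Copt) = (1+\veps)^2\phi_X(\Copt)$ as required.

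Verifying \Cref{cond:core:1} amounts to exhibiting, for each $j\in[k]$, some $\hat\mu_j\in \bar Y$ with $\phi_{\Aopt_j}(\{\hat\mu_j\}) \leq (1+\veps)\phi_{\Aopt_j}(\{\copt_j\})$. Because this is the $k$-means case, the assumption $\copt_j = \mu(\Aopt_j)$ yields the standard bias-variance identity
\[
  \phi_{\Aopt_j}(\{y\}) \ = \ \phi_{\Aopt_j}(\{\copt_j\}) + |\Aopt_j|\cdot\|y-\copt_j\|_2^2 \qquad(\forall y\in\bbR^d),
\]
so it suffices to produce $\hat\mu_j \in \bar Y$ with $|\Aopt_j|\cdot\|\hat\mu_j - \copt_j\|_2^2 \leq \veps\cdot\phi_{\Aopt_j}(\{\copt_j\})$.

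I would obtain such a $\hat\mu_j$ by the usual Inaba--Katoh--Imai sampling argument: let $m := \lceil 1/\veps\rceil$, draw $x_1,\dotsc,x_m$ independently and uniformly at random from $\Aopt_j$, and set $\hat\mu := (x_1+\dotsb+x_m)/m$. A one-line variance computation using independence gives
\[
  \mathbb{E}\,\|\hat\mu - \copt_j\|_2^2 \ = \ \frac{1}{m}\cdot\frac{\phi_{\Aopt_j}(\{\copt_j\})}{|\Aopt_j|} \ \leq \ \veps\cdot\frac{\phi_{\Aopt_j}(\{\copt_j\})}{|\Aopt_j|}.
\]
By the probabilistic method, at least one realization $(x_1,\dotsc,x_m) \in \Aopt_j^m \subseteq X^m$ attains this expectation, and its mean is, by construction of $\bar Y$, an element of $\bar Y$ of the required quality. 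This establishes \Cref{cond:core:1} for that $j$; applying the argument to each cluster independently (the centers $\hat\mu_1,\dotsc,\hat\mu_k$ all live in the same non-adaptive set $\bar Y$) completes the verification.

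There is no serious obstacle: the variance calculation is elementary, and \Cref{cor:main} handles the rest. The only conceptual point worth flagging is that the candidate set $\bar Y$ is non-adaptive --- it does not depend on $C_{i-1}$ --- yet it simultaneously contains a near-optimal replacement center for every one of the $k$ optimal clusters, which is exactly what \Cref{cond:core:1} requires and exactly what makes a single precomputed $\bar Y$ sufficient for all $t$ rounds.
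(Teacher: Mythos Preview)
Your proposal is correct and follows essentially the same route as the paper: verify \Cref{cond:core:1} with $\gamma = 1+\veps$ via the Inaba--Katoh--Imai sampling argument (which the paper packages as \Cref{fact:inaba}), then invoke \Cref{cor:main}. Your inline expansion of the variance computation and the probabilistic-method step is exactly what the paper's one-line proof of \Cref{fact:inaba} is abbreviating, and your handling of the iteration count via the substitution $\alpha' = \alpha(1+\veps)$ matches how \Cref{cor:main} is meant to be applied.
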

\begin{proof}
  Lemma~\ref{fact:inaba} implies that \Cref{cond:core:1} is satisfied with the set
  $\bar Y$ and $\gamma = 1+\veps$.
  The theorem therefore follows from \Cref{cor:main}.
\end{proof}
The construction of the set $\bar Y$ from \Cref{fact:kmeans-1+eps}
crucially relies on the bias-variance decomposition available for squared
Euclidean distance (cf.~\Cref{fact:bias-var}).

Next consider the Euclidean $k$-medians case $\Delta(x,y) := \|x-y\|_2$.
Since the mean (as used in \Cref{fact:kmeans-1+eps}) minimizes
$z\mapsto \phi_{\Aopt_j}(\cbr[0]{z})$
for $k$-means, it is natural to replace this mean selection
with a more generic optimization procedure.
Additionally, by using a \emph{stochastic online} procedure,
there is hope of using $\poly(1/\veps)$ data points as in \Cref{fact:kmeans-1+eps}.

There is one catch --- the standard convergence time for stochastic
gradient descent (henceforth sgd) depends polynomially on the diameter of the
space being searched.  In order to obtain multiplicative optimality
as in \Cref{cond:core:1},
the diameter of the space must be related to the cost of an optimal cluster
$\phi_{\Aopt_j}(\Copt)$.
Fortunately, this quantity can be guessed with $n^{-3}$ trials.
\begin{lemma}
  \label{fact:guess-ball}
  Define the procedure
  \begin{quote}
    \guessball: uniformly sample center $y\in X$ and sizes $b, m \in [n]$,
    let $B$ denote the $b$ points closest to $y$, and return the triple
    $(y, m, B)$.
  \end{quote}
  For any subset $A \subseteq X$ with mean $c:=\mu(A)$,
  with probability at least $n^{-3}$, simultaneously:
  $\phi_{A}(\cbr{c}) \leq \phi_B(\cbr{y}) \leq (1+2^q) \phi_{A}(\cbr{c})$,
  and $m = |A|$,
  and $\Delta(y, c) \leq \phi_{A}(\cbr{c})/|A|$.
\end{lemma}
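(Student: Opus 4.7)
The plan is to exhibit a specific triple $(y^*, b^*, m^*)$ satisfying all three conditions simultaneously; since \guessball draws $y \in X$, $b \in [n]$, and $m \in [n]$ independently and uniformly, any fixed such triple is sampled with probability at least $n^{-3}$. Set $m^* := |A|$, and use the averaging identity $\sum_{x \in A} \Delta(x,c)/|A| = \phi_A(\{c\})/|A|$ to pick $y^* \in A \subseteq X$ with $\Delta(y^*,c) \leq \phi_A(\{c\})/|A|$; this handles the second and third conditions.

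For the first condition, sort $X$ by non-decreasing distance to $y^*$, write $d_i := \Delta(x_i, y^*)$, and note that $\phi_{B_b}(\{y^*\}) = \sum_{i=1}^b d_i$ is non-decreasing in $b$. Take $b^*$ to be the smallest $b \in [n]$ with $\phi_{B_b}(\{y^*\}) \geq \phi_A(\{c\})$. Such a $b^*$ exists because $\phi_{B_n}(\{y^*\}) = \phi_X(\{y^*\}) \geq \phi_A(\{y^*\}) \geq \phi_A(\{c\})$, where the last inequality uses that $c = \mu(A)$ minimizes $\phi_A(\{z\})$ over $z$. Combining \Cref{fact:Cp} applied to $y^*$ with the condition $\Delta(y^*,c) \leq \phi_A(\{c\})/|A|$ additionally gives $\phi_A(\{y^*\}) \leq 2^q \phi_A(\{c\})$.

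Now it suffices to show $d_{b^*} \leq 2^q \phi_A(\{c\})$, since then $\phi_{B_{b^*}}(\{y^*\}) = \phi_{B_{b^*-1}}(\{y^*\}) + d_{b^*} < \phi_A(\{c\}) + 2^q \phi_A(\{c\})$. Split into two cases. If $b^* \leq |A|$, then $d_{b^*} \leq d_{|A|}$; and since $A \subseteq X$ supplies $|A|$ elements with distance each at most $\max_{a \in A} \Delta(a,y^*) \leq \phi_A(\{y^*\}) \leq 2^q \phi_A(\{c\})$, the $|A|$-th order statistic inherits the same upper bound. If $b^* > |A|$, then $\phi_{B_{b^*-1}}(\{y^*\}) < \phi_A(\{c\}) \leq \phi_A(\{y^*\})$ forces $A \not\subseteq B_{b^*-1}$, so some $a \in A$ sits at sorted position $\geq b^*$, giving $d_{b^*} \leq \Delta(a,y^*) \leq 2^q \phi_A(\{c\})$.

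The hardest step is precisely this final case analysis: one must rule out that the discrete sequence $(\phi_{B_b}(\{y^*\}))_{b \in [n]}$ jumps entirely over the target interval $[\phi_A(\{c\}), (1+2^q)\phi_A(\{c\})]$. Each case handles this by exhibiting an explicit element of $A$ whose distance to $y^*$ directly bounds the triggering jump $d_{b^*}$.
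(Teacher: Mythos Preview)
Your proof is correct and follows essentially the same approach as the paper: exhibit a good triple deterministically, choose $y^*$ via averaging, and bound the overshoot $d_{b^*}$ by locating an element of $A$ not yet absorbed into $B_{b^*-1}$. The only cosmetic difference is that you define $b^*$ as the minimal crossing index and then split on $b^* \le |A|$ versus $b^* > |A|$, whereas the paper first tries $b=|A|$ (your Case~1 collapses to this, since $\phi_{B_{|A|}}(\{y^*\}) \le \phi_A(\{y^*\}) \le 2^q\phi_A(\{c\})$ directly) and only grows $B$ when that fails; the ``$A \not\subseteq B_{b^*-1}$'' argument in your Case~2 is identical to the paper's.
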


The proof of \Cref{fact:guess-ball} will be given momentarily,
but with that concern out of the way, now note the sgd-based \select.

\begin{theorem}
  \label{fact:select-sgd}
  Consider the case of Euclidean $k$-medians, meaning $\Delta(x,y) = \|x-y\|_2$.
  Define a procedure \selectsgd which generates $2 n^{3 + \lceil 1/\veps^2\rceil}$ iid
  samples as follows:
  \begin{quote}
    Set $r := \phi_B(\cbr{y})/m$ where $(y,m,B)$ are from \guessball.
    Perform $s:=\lceil 1/\veps^2 \rceil$ iterations of sgd (cf. \Cref{fact:sgd})
    starting from $y$,
    on objective function $w\mapsto \bbE_x \|x-w\|_2$
    constrained to the Euclidean ball of radius $r$ around $y$,
    and using step size $\eta := 2r / \sqrt{s}$.  Return the unweighted average of the sgd iterates.
  \end{quote}
  If \greedy is run with initial clusters $C_0$ with $\phi_X(C_0) \leq \alpha(1+16\veps)\phi_X(\Copt)$ for some $\alpha \geq 1$,
  the above \selectsgd routine,
  and $t \geq 4k(\ln((\alpha - 1)/\veps) + 8\ln(1/\delta)$,
  then with probability at least $1-\delta$,
  the output centers $C_t$ satisfy $\phi_X(C_t) \leq (1+\veps)(1+16\veps)\phi_X(\Copt)$.
\end{theorem}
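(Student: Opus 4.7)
The plan is to invoke \Cref{cor:main} part~(2) with $\gamma = 1 + 16\veps$: the stated hypothesis $\phi_X(C_0) \leq \alpha(1+16\veps)\phi_X(\Copt)$ matches the corollary's requirement on $C_0$, so the only thing to verify is that \Cref{cond:core:2} holds with some absolute constant probability $\rho > 0$ per round, conditionally on the history. Fix a round $i$ and let $m \in [k]$ be the cluster attaining the argmax on the right-hand side of \Cref{cond:core:2}; a sufficient condition is then to produce some $c \in Y_i$ with $\phi_{\Aopt_m}(\{c\}) \leq (1+16\veps)\phi_{\Aopt_m}(\{\copt_m\})$. I will show that a single \selectsgd trial yields such a candidate with probability at least $\Omega(n^{-(3+s)})$, where $s = \lceil 1/\veps^2 \rceil$, so that $2n^{3+s}$ independent trials populate $Y_i$ with such a candidate with probability at least $1 - 1/e$, giving the desired $\rho$.

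Single-trial success decomposes into three independent events: (a) \guessball is ``good'' for $A = \Aopt_m$ in the sense of \Cref{fact:guess-ball}, which happens with probability $\geq n^{-3}$; (b) every one of the $s$ random samples drawn by SGD lands in $\Aopt_m$, which has probability $\geq (|\Aopt_m|/n)^s \geq n^{-s}$ and, conditionally on this event, makes the SGD samples iid uniform on $\Aopt_m$ so that the stochastic subgradients are unbiased for $F(w) := \phi_{\Aopt_m}(\{w\})/|\Aopt_m|$; (c) the SGD output $\bar w$ satisfies $F(\bar w) \leq (1+16\veps) F(\copt_m)$. For (c), the bounds from \Cref{fact:guess-ball} combined with $\|\mu(\Aopt_m) - \copt_m\| \leq F(\copt_m)$ (by Jensen) and $F(\mu(\Aopt_m)) \leq 2 F(\copt_m)$ (the Euclidean mean is a $2$-approximation to the $1$-median) yield $\|y - \copt_m\| = O(F(\copt_m))$ and $r = \Theta(F(\copt_m))$. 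Plugging the subgradient bound $G = 1$ for $w \mapsto \|x - w\|_2$, the step size $\eta = 2r/\sqrt{s}$, and $s = \lceil 1/\veps^2 \rceil$ into the standard projected-SGD guarantee (\Cref{fact:sgd}) gives an expected suboptimality of $O(\veps F(\copt_m))$, which Markov's inequality upgrades to $F(\bar w) \leq (1+16\veps) F(\copt_m)$ with probability at least $1/2$.

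The main technical obstacle lies in (c): the ball $B(y, r)$ need not contain $\copt_m$, because $r$ can be as small as $F(\copt_m)$ while $\|y - \copt_m\|$ can be as large as $3 F(\copt_m)$. Consequently, the projected-SGD guarantee must be applied against the in-ball minimizer of $F$ as comparator rather than $\copt_m$ itself, and an additional argument using the $1$-Lipschitzness and convexity of $F$ along the segment $[y,\copt_m]$ is needed to keep the resulting slack within an $O(\veps) \cdot F(\copt_m)$ additive term; the factor $16$ in the approximation ratio absorbs these slack constants. Once (a)--(c) are in hand each trial succeeds with probability at least $n^{-(3+s)}/2$; since $|Y_i| = 2n^{3+s}$, \Cref{cond:core:2} holds with probability at least $1 - 1/e$ per round, and invoking \Cref{cor:main}(2) with this $\rho$, $\gamma = 1+16\veps$, and the assumed lower bound on $t$ delivers the claimed $\phi_X(C_t) \leq (1+\veps)(1+16\veps)\phi_X(\Copt)$ with probability at least $1-\delta$.
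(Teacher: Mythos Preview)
Your decomposition --- (a) \guessball succeeds, (b) all $s$ SGD samples land in the target cluster, (c) SGD achieves a $(1+16\veps)$ ratio on that cluster --- followed by boosting over $2n^{3+s}$ independent trials and invoking \Cref{cor:main}(2), is exactly the paper's route. (The paper phrases it as establishing \Cref{cond:core:1} for an arbitrary $j$ rather than \Cref{cond:core:2} for the argmax cluster $m$, but since the former implies the latter this is immaterial.)

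The gap is in your handling of (c). Your workaround for the obstacle ``$\copt_m$ may lie outside $B(y,r)$'' --- compare SGD against the in-ball minimizer $w^\star$, then patch the difference $F(w^\star)-F(\copt_m)$ via $1$-Lipschitzness and convexity along $[y,\copt_m]$ --- does not yield a $(1+O(\veps))$ ratio. The best such arguments give is $F(w^\star)\leq F(\mu(\Aopt_m))\leq 2F(\copt_m)$ (since $\mu(\Aopt_m)\in B(y,r)$), or via the segment $F(w_\partial)\leq(1-t_0)F(y)+t_0F(\copt_m)$ with $t_0=r/\|y-\copt_m\|$ possibly as small as $1/3$; in either case the slack is $\Theta(F(\copt_m))$, not $O(\veps F(\copt_m))$, and the constant $16$ cannot absorb it.

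The paper avoids the obstacle entirely by exploiting how \Cref{fact:sgd} is stated: the comparator $\bar w$ need \emph{not} lie in the constraint set $S$; one only needs $B\geq\sup_{w\in S}\|w-\bar w\|_2$. Taking $\bar w=\copt_j$ and bounding, for every $w\in S$, $\|w-\copt_j\|\leq\|w-y\|+\|y-\copt_j\|\leq r+\phi_{\Aopt_j}(\Copt)/|\Aopt_j|\leq 4\phi_{\Aopt_j}(\Copt)/|\Aopt_j|$ via the \guessball estimates, the SGD guarantee gives $F(\bar w_s)-F(\copt_j)\leq 4BL/\sqrt{s}\leq 16\veps\,\phi_{\Aopt_j}(\Copt)/|\Aopt_j|$ directly --- which is exactly where the $16$ comes from. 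So rather than routing through an in-ball comparator, apply \Cref{fact:sgd} with $\copt_j$ itself as the reference point.
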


The full proof is in \Cref{app:not-a-ptas}, but can be sketched as follows.  For $k$-medians,
subgradients have norm 1, and the guarantees on \guessball give
$\Delta(y,\copt_j)\leq \phi_{\Aopt_j}(\cbr{\copt_j})/|\Aopt_j|$,
so $\lceil 1/\veps^2\rceil$ sgd iterations suffice (cf. \Cref{fact:sgd}), if somehow the random data
points were drawn directly from $\Aopt_j$.  But $|\Aopt_j| \geq 1$, so all data points are drawn from it with
probability at least $n^{-\lceil1/\veps^2\rceil}$.
Note that this proof grants the \emph{existence} of a good sequence of $\lceil
1/\veps^2\rceil$ examples together with a good triple $(y,m,B)$,
thus another approach, mirroring non-adaptive scheme in \Cref{fact:kmeans-1+eps},
is to enumerate these $n^{3+\lceil\veps^2\rceil}$ possibilities, but process them with sgd rather than
the uniform averaging in \Cref{fact:kmeans-1+eps}.

\begin{proof}[Proof of \Cref{fact:guess-ball}]
  Fix any cluster $A$.
  This proof establishes the existence of a point $y\in X$ and a set $B\subseteq X$ of closest points which
  satisfies all required properties together with $m:=|A|$.
  Since one such triple exists, then the probability of sampling one uniformly at random
  is at least $n^{-3}$.

  Choose $y \in X$ with $\Delta(y,c) = \min_{x\in A}\Delta(x,c) \leq \phi_{A}(\cbr{c})/|A|$.
  Let $B$ denote the $|A|$ points closest to $y$ in $X$ (ties broken arbitrarily).
  By \Cref{fact:Cp},
  \[
    \phi_B(\{y\})
  \ \leq \ \phi_{A}(\{y\})
  \ \leq \ 2^q\phi_{A}(\{c\})\,.
  \]
  If it also holds that $\phi_B(\{y\}) \geq \phi_A(\{c\})$, the proof is complete.

  Otherwise suppose $\phi_B(\{y\})< \phi_A(\{c\})$,
  which also implies $|A| \geq 1$, since otherwise $|X| = 0 = \phi_A(\cbr{c}) = \phi_A(\cbr{y})$.
  Consider the process of iteratively adding to $B$ those points in $X\setminus B$
  which are closest to $y$,
  stopping this process at the first time when $\phi_B(\{y\}) \geq \phi_A(\{c\})$;
  it is claimed that this final $B$ also satisfies $\phi_B(\{y\}) \leq (1+2^q)\phi_A(\{c\})$.

  To this end, note that the penultimate $B'$ did not satisfy $B'\supseteq A$,
  since that would mean
  \[
    \phi_{B'}(\cbr{y}) \ \geq \ \phi_A(\cbr{y}) \ \geq \ \phi_A(\cbr{c}) \ > \ \phi_{B'}(\cbr{y})\,.
  \]
  As such, the final added point $v$ can be no further from $y$ than the
  furthest element of $A\setminus B'$, meaning by \Cref{fact:Cp}
  \[
    \phi_{\cbr{v}}(\cbr{y})
    \ \leq \ \max_{u\in A\setminus B'} \phi_{\cbr{u}}(\cbr{y})
    \ \leq \ \phi_{A}(\cbr{y})
    \ = \ |A| \del{\psi_{A}(\cbr{y})}^q
    \ \leq \ |A| \del{ 2\psi_{A}(\cbr{c})}^q
    \ \leq \ 2^q \phi_{A}(\cbr{c}) \,,
  \]
  thus
  \[
    \phi_B(\cbr{y})
    \ = \ \phi_{B'}(\cbr{y}) + \phi_{\cbr{v}}(\cbr{y})
    \ < \ \phi_A(\cbr{c}) + 2^q \phi_A(\cbr{c}) \,.
    \qedhere
  \]
\end{proof}

To close, note how to get $1+\veps$ with $\Delta(x,y) = \|x-y\|^p$ for a general
norm and exponent $p$.
Unfortunately, the number of samples used here is exponential in the dimension.

\begin{theorem}
  \label{fact:gkm-ball}
  Let a generalized $k$-medians problem with $\Delta(x,y) = \|x-y\|^p$ for some norm $\|\cdot\|$ and exponent $p\geq 1$ be given,
  along with
  $\veps \in (0,1)$ and $\delta > 0$,
  and initial clusters $C_0$ with $\phi_X(C_0) \leq \alpha(1+\veps) \phi_X(\Copt)$ for some $\alpha \geq 1$.
  Suppose \greedy is run for
  $t\geq 4k \ln ((\alpha - 1) / \veps) + 8 \ln(1/\delta)$ rounds where $Y_i$ is chosen by \selectball,
  a procedure which returns $\cO(n^3 \veps^{-qd/p})$ iid samples, each generated as follows:
  \begin{quote}
    Obtain $(y, m, B)$ from \guessball, and output a uniform random sample
    from the $\dist$-ball of radius $2(\phi_B(\cbr{y})/m)^{1/p}$ centered at $y$.
  \end{quote}
  Then with probability at least $1-\delta$, the resulting centers $C_t$
  satisfy $\phi_X(C_t) \leq (1+\veps)^2 \phi_X(\Copt)$.
\end{theorem}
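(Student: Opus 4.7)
The plan is to mimic the random-sampling proof of \Cref{fact:gkm++}: show \Cref{cond:core:2} holds in each round with constant probability (via boosting over the $N=\Theta(n^3\veps^{-qd/p})$ draws inside \selectball, as in \Cref{fact:cond_boost}) and then invoke \Cref{cor:main}(2). Set $\gamma := 1+\veps$ and consider an iteration with $\phi_X(C_{i-1}) > \gamma(1+\veps)\phi_X(\Copt)$ (otherwise the target already holds). Let $m := \argmax_{j\in[k]}\sbr{\phi_{\Aopt_j}(C_{i-1})-\gamma\phi_{\Aopt_j}(\Copt)}$; exactly as in the proof of \Cref{fact:gkm++:1}, the assumption forces $\phi_{\Aopt_m}(C_{i-1}) > \gamma\phi_{\Aopt_m}(\Copt)$ and in particular $\Aopt_m \neq \emptyset$.

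The crux is showing that a single draw from \selectball produces $c$ with $\phi_{\Aopt_m}(\cbr{c}) \leq \gamma\phi_{\Aopt_m}(\cbr{\copt_m})$ with probability at least $p_0 = \Omega(n^{-3}\veps^{qd/p})$. A draw is two-stage: \guessball returns a triple $(y,m',B)$, after which one samples uniformly from the $\dist$-ball $\cB$ about $y$ of radius $r := 2(\phi_B(\cbr{y})/m')^{1/p}$. By \Cref{fact:guess-ball}, with probability at least $n^{-3}$ the triple is ``good'' for the particular cluster $\Aopt_m$, meaning $m'=|\Aopt_m|$, $\phi_{\Aopt_m}(\cbr{\copt_m}) \leq \phi_B(\cbr{y}) \leq (1+2^q)\phi_{\Aopt_m}(\cbr{\copt_m})$, and $\Delta(y,\copt_m) \leq \phi_{\Aopt_m}(\cbr{\copt_m})/|\Aopt_m|$. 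Writing $R := \psi_{\Aopt_m}(\cbr{\copt_m})$ and recalling $q=p$ in this setting, the third bound gives $\|y-\copt_m\| \leq R$ and the second gives $r \in [2R,\,2(1+2^q)^{1/p}R]$. Therefore $\cB$ contains the concentric $\dist$-ball $\cB'$ about $\copt_m$ of any radius $\veps'R \leq R$, and since every norm ball in $\bbR^d$ is a translated rescaling of the unit ball, a uniform draw from $\cB$ lands in $\cB'$ with probability at least $(\veps'R/r)^d \geq (\veps'/(2(1+2^q)^{1/p}))^d$.

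Choosing $\veps' := (1+\veps)^{1/q}-1 = \Theta(\veps/q)$, any $c \in \cB'$ satisfies $\psi_{\cbr{c}}(\cbr{\copt_m}) = \|c-\copt_m\| \leq \veps' R$, so by \Cref{fact:Cp}, $\psi_{\Aopt_m}(\cbr{c}) \leq (1+\veps')R$, which raises to $\phi_{\Aopt_m}(\cbr{c}) \leq (1+\veps)\phi_{\Aopt_m}(\cbr{\copt_m}) = \gamma\phi_{\Aopt_m}(\cbr{\copt_m})$. As in the final display of the proof of \Cref{fact:gkm++:1}, this per-cluster bound at index $m$ implies \Cref{cond:core:2}. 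Absorbing the $q$-dependent constants $(1+2^q)^{1/p}$ and $q^d$ into $\cO$-notation yields $p_0 = \Omega(n^{-3}\veps^{qd/p})$, so the $N = \Theta(n^3\veps^{-qd/p})$ iid draws inside \selectball succeed with probability at least $1/2$ in each iteration. Invoking \Cref{cor:main}(2) with $\rho=1/2$ and the hypothesized $t$ delivers $\phi_X(C_t) \leq \gamma(1+\veps)\phi_X(\Copt) = (1+\veps)^2\phi_X(\Copt)$ with probability at least $1-\delta$.

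The main obstacle is the volume-ratio step for an arbitrary norm ball in $\bbR^d$: nested concentric $\dist$-balls of radii $r_1 \leq r_2$ obey $\mathrm{vol}(\cB_1)/\mathrm{vol}(\cB_2) = (r_1/r_2)^d$ because every norm ball is a translated rescaling of the unit ball, and this is precisely what matches the $\veps^{-qd/p}$ exponent in the sample count (with $q=p$ here, the exponent is simply $d$). Everything else is routine bookkeeping and an appeal to already-proved tools: \Cref{fact:guess-ball} to find a good $(y,m',B)$, \Cref{fact:Cp} to turn metric closeness into cost closeness, and \Cref{cor:main}(2) to aggregate across iterations.
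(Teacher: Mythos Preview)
Your argument is correct and matches the paper's proof: the paper packages the core step as \Cref{fact:gkm-ball:2} (for any fixed cluster $j$, a single \selectball draw satisfies the $j$-clause of \Cref{cond:core:1} with probability $\Omega(n^{-3}\veps^{qd/p})$ via exactly your \guessball + ball-containment + norm-ball volume-ratio computation), then invokes \Cref{fact:cond_boost} and \Cref{cor:main}. Your detour through \Cref{cond:core:2} and the argmax cluster $m$ is the same mechanics specialized to $j=m$; the paper's per-cluster formulation avoids the auxiliary hypothesis $\phi_X(C_{i-1})>\gamma(1+\veps)\phi_X(\Copt)$, but this is a cosmetic difference.
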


The full proof appears in \Cref{app:not-a-ptas}, but is easy to sketch.
By \Cref{fact:guess-ball}, not only is a point $y\in \Aopt_j$ with $\Delta(y,\copt_j)\leq\phi_{\Aopt_j}(\cbr{\copt_j})/|\Aopt_j|$
in hand, but additionally an accurate estimate on $\Delta(y,\copt_j)$.
The chosen sampling radius is large enough to include points around $\copt_j$ which are all $(1+\veps)$ accurate,
and the probability of sampling this smaller ball via the larger is just the ratio of their volumes.
The result follows by boosting the probability via \Cref{fact:cond_boost}
and applying \Cref{cor:main}.

\section{Experiments with $k$-means}
\label{sec:exp}

Experimental results appear in \Cref{fig:exp}.  The table in \Cref{fig:exp:table} summarizes the improvement over \kmpp
by \greedy with \kmpp-inspired \selectpp (cf. \Cref{sec:gkm++}):
for each of 5 UCI datasets with 1000-20000 points, \kmpp and \greedy/\selectpp were run 10
times for $t \in \{10, 50\}$, and then a ratio of median and minimum performance was recorded.  Of course, while the experiment
is favorable, it is somewhat unfair as \kmpp requires less computation.  On the other hand, \greedy is more amenable to various speedups,
for instance it is trivially parallelized.

The curves in \Cref{fig:exp:plot} plot the (median) cost on \texttt{aba} as a function of the number of centers.
These plots indicate an area where the analysis in the present work may be improved.  Namely, the results
of \Cref{sec:gkm} require good initialization for the best bounds
(e.g., a naive analysis with $C_0 = \emptyset$ introduces logarithmic dependencies on interpoint distance ratios).
According to \Cref{fig:exp:plot}, this is potentially an artifact of
the analysis: the dashed line uses \kmpp for the first 25 centers and \greedy for the remaining 25, and it does not outperform full \greedy.
Another possibility as that there are other data-dependent quantities which remove the need for good initialization.

\begin{figure}[t]
  \begin{subfigure}[b]{0.5\textwidth}
    \centering

    \begin{tabular}{|c||c|c||c|c|}
      \multicolumn{1}{c||}{}
      & \multicolumn{2}{c||}{\texttt{med(gr)/med(++)}}
      & \multicolumn{2}{|c}{\texttt{min(gr)/min(++)}}
      \\
      \hline
      \multicolumn{1}{c||}{}
      & $k = 10$
      & $k = 50$
      & $k = 10$
      & \multicolumn{1}{|c}{$k = 50$}
      \\
      \hline
      \texttt{aba} & 0.747 & 0.662 & 0.843 & 0.693 \\
      \texttt{car} & 0.794 & 0.915 & 0.837 & 0.924 \\
      \texttt{eeg} & 0.723 & 0.775 & 0.767 & 0.823 \\
      \texttt{let} & 0.746 & 0.787 & 0.855 & 0.804 \\
      \texttt{mag} & 0.729 & 0.788 & 0.824 & 0.811 \\
      \hline
      \hline
    \end{tabular}
    \caption{Cost ratio on five datasets.}
    \label{fig:exp:table}
  \end{subfigure}%
  \begin{subfigure}[b]{0.5\textwidth}
    \centering
    \includegraphics[width=0.95\textwidth]{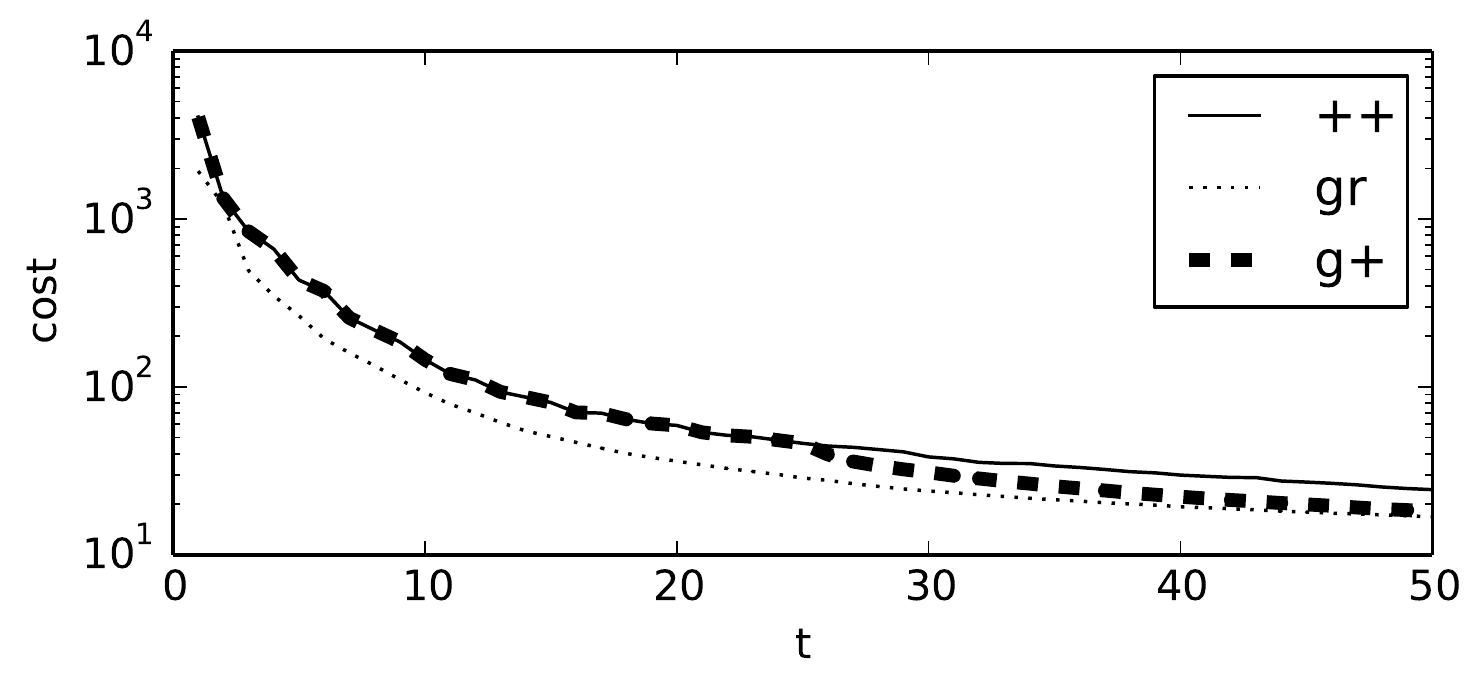}
    \caption{\kmpp initialization does not help.}
    \label{fig:exp:plot}
  \end{subfigure}%
  \caption{$k$-means comparisons between \kmpp (\texttt{++}), \greedy with subsampling (\texttt{gr}), and
  greedy initialized with \kmpp (\texttt{g+}).}
  \label{fig:exp}
\end{figure}

\subsection*{Acknowledgement}
The authors thank Chandra Chekuri for valuable comments and references.

\bibliographystyle{plainnat}
\bibliography{bib}

\appendix

\section{Technical tools}

\begin{lemma}[Bias-Variance]
  \label{fact:bias-var}
  For any finite subset $A\subseteq \bbR^d$
  and any $z\in\bbR^d$,
  \[
    \sum_{x\in A} \|x-z\|_2^2
    \ = \
    \sum_{x\in A} \del{ \|x-\mu(A)\|_2^2 + \|\mu(A) - z\|_2^2 }
    \,.
  \]
\end{lemma}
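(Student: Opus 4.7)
The plan is the standard add-and-subtract $\mu(A)$ trick. First I would write each summand on the left as
\[
\|x-z\|_2^2 \;=\; \|(x-\mu(A)) + (\mu(A)-z)\|_2^2 \;=\; \|x-\mu(A)\|_2^2 + 2\ip{x-\mu(A)}{\mu(A)-z} + \|\mu(A)-z\|_2^2,
\]
using the elementary identity $\|a+b\|_2^2 = \|a\|_2^2 + 2\ip{a}{b} + \|b\|_2^2$.

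Next I would sum over $x\in A$. The first and third terms reproduce exactly the right-hand side of the claimed identity, so it only remains to show that the cross term vanishes:
\[
\sum_{x\in A} 2\ip{x-\mu(A)}{\mu(A)-z} \;=\; 2\ip*{\sum_{x\in A}(x-\mu(A))}{\mu(A)-z} \;=\; 0,
\]
where the first equality uses bilinearity of $\ip{\cdot}{\cdot}$ (pulling out the fixed vector $\mu(A)-z$), and the second uses the defining property of the mean, namely $\sum_{x\in A} x = |A|\,\mu(A)$, so $\sum_{x\in A}(x-\mu(A)) = 0$.

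There is essentially no obstacle here: the only step that requires any thought is verifying that the inner product $\ip{\cdot}{\mu(A)-z}$ can be pulled out of the sum, which is immediate from bilinearity since $\mu(A)-z$ does not depend on $x$. Combining the two displays gives the identity exactly as stated.
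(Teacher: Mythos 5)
Your proof is correct and follows essentially the same route as the paper's: add and subtract $\mu(A)$, expand the square, and observe that the cross term vanishes because $\sum_{x\in A}(x-\mu(A))=0$. The only difference is that you spell out the vanishing of the cross term explicitly, which the paper leaves implicit (and the paper's display has a harmless typo writing $\mu(C)$ for $\mu(A)$).
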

\begin{proof}
  \begin{align*}
    \sum_{x\in A} \|x-z\|_2^2
    & \ = \
    \sum_{x\in A} \|x-\mu(C) + \mu(C) - z\|_2^2
    \\
    & \ = \
    \sum_{x\in A} \del{ \|x-\mu(A)\|_2^2 + \|\mu(A) - z\|_2^2 }
    + 2 \ip{\mu(C) - z}{\sum_{x\in A} (x-\mu(A))} \,.
  \end{align*}
\end{proof}

\begin{lemma}[\citealp{inaba1994applications}]
  \label{fact:inaba}
  For any finite subset $A\subseteq \bbR^d$ and any $\eps > 0$, there exists
  $x_1, x_2, \dotsc, x_m \in A$ with $m = \lceil1/\eps\rceil$ such that $\mu_m
  := \sum_{i=1}^m x_i / m$ satisfies
  \begin{align*}
    \sum_{x \in A} \norm{x - \mu_m}_2^2
    & \ \leq \
    (1+\eps)
    \sum_{x \in A} \norm{x - \mu(A)}_2^2
    \,.
  \end{align*}
\end{lemma}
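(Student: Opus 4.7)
The plan is to prove Inaba's lemma via the probabilistic method: draw $X_1, \ldots, X_m$ independently and uniformly from $A$, compute the expectation of $\sum_{x \in A} \|x - \bar X\|_2^2$ where $\bar X := \frac{1}{m}\sum_{i=1}^m X_i$, and show this expectation is at most $(1+\eps)\sum_{x\in A}\|x-\mu(A)\|_2^2$. Then the existence of a deterministic tuple achieving the bound follows because a random variable cannot always exceed its expectation.

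First I would apply the bias-variance decomposition \Cref{fact:bias-var} pointwise in the randomness, substituting $z = \bar X$, to obtain
\[
    \sum_{x\in A} \|x - \bar X\|_2^2 \;=\; \sum_{x\in A}\|x-\mu(A)\|_2^2 \;+\; |A|\cdot \|\mu(A) - \bar X\|_2^2.
\]
Taking expectations, the first term is deterministic and the second reduces to $|A|\cdot \bbE\|\mu(A) - \bar X\|_2^2$. Next, since $\bar X$ is the average of $m$ iid copies of a random variable $X_1$ whose distribution has mean $\mu(A)$ and (per-coordinate) total variance $\frac{1}{|A|}\sum_{x\in A}\|x-\mu(A)\|_2^2$, the standard identity for the variance of an empirical mean gives
\[
    \bbE\|\mu(A) - \bar X\|_2^2 \;=\; \frac{1}{m}\cdot \frac{1}{|A|}\sum_{x\in A}\|x-\mu(A)\|_2^2.
\]

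Combining these two displays yields
\[
    \bbE\sum_{x\in A}\|x-\bar X\|_2^2 \;=\; \left(1+\frac{1}{m}\right)\sum_{x\in A}\|x-\mu(A)\|_2^2,
\]
and since $m = \lceil 1/\eps\rceil$ satisfies $1/m \leq \eps$, the right-hand side is at most $(1+\eps)\sum_{x\in A}\|x-\mu(A)\|_2^2$. Therefore at least one realization $(x_1,\ldots,x_m)$ of the iid sample must achieve the bound, which produces the desired tuple.

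There is no real obstacle: both ingredients (the bias-variance identity and the variance-of-the-mean formula) are elementary, and the only thing to verify is that $\bbE\bar X = \mu(A)$ and that the per-coordinate variances of $X_1$ sum to $\frac{1}{|A|}\sum_{x\in A}\|x-\mu(A)\|_2^2$, both of which are immediate from $X_1$ being uniform on $A$. The only mild care needed is to keep track of the factor $|A|$ correctly when moving between the empirical and distributional forms of the variance.
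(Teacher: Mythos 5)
Your proof is correct and matches the paper's approach exactly: the paper's proof is the one-line remark that the lemma ``is a simple consequence of the first moment method and \Cref{fact:bias-var},'' which is precisely the argument you flesh out (iid uniform sampling from $A$, bias-variance to isolate the $\|\mu(A)-\bar X\|_2^2$ term, the variance-of-the-mean identity to get the $1/m$ factor, and the probabilistic method to extract a deterministic tuple).
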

\begin{proof}
  This is a simple consequence of the first moment method and
  \Cref{fact:bias-var}.
\end{proof}

\begin{lemma}
  \label{fact:technical:1}
  If $x\in [0,1]$ and $p \geq 1$, then $(1+x)^{1/p} - 1 \geq x(2^{1/q} - 1)$.
\end{lemma}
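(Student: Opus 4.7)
The plan is to derive the bound from the concavity of $t \mapsto t^{1/p}$ on $[0,\infty)$, which holds for $p \geq 1$ because the second derivative $(1/p)(1/p-1)t^{1/p-2}$ is nonpositive. It follows that $g(x) := (1+x)^{1/p}$ is concave on $[0,1]$, so Jensen's inequality applied to the convex combination $x = (1-x)\cdot 0 + x\cdot 1$ yields
\[
    (1+x)^{1/p} \ = \ g(x) \ \geq \ (1-x)\,g(0) + x\,g(1) \ = \ (1-x) + x\cdot 2^{1/p} \ = \ 1 + x\bigl(2^{1/p}-1\bigr).
\]
Subtracting $1$ from both sides delivers the claimed inequality. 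Equality holds at both endpoints $x \in \{0,1\}$, and the inequality is strict in the interior whenever $p > 1$ by strict concavity.

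There is no real obstacle: the entire argument is a one-line application of concavity (a secant-below-graph bound for a concave function, tightened to the specific chord from $0$ to $1$). The only bookkeeping point is that the symbol $q$ in the statement must be read as $p$ rather than as the $q$ from \Cref{sec:gkm} (where $q=1$ for $k$-means instances): with $q=1$ the inequality would already fail at $x=1$, since then the right-hand side equals $1$ while the left-hand side equals $2^{1/p}-1 < 1$ for every $p > 1$. With $q=p$ the two sides match exactly at $x=0$ and $x=1$, which also explains why the chord bound is the natural — and essentially optimal — linear lower bound on $(1+x)^{1/p}-1$ over the interval $[0,1]$.
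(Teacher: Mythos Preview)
Your proof is correct and is essentially identical to the paper's: both observe that $f(x)=(1+x)^{1/p}-1$ is concave on $[0,1]$ and hence lies above the secant through $(0,0)$ and $(1,2^{1/p}-1)$. Your observation that the ``$q$'' in the statement must be read as ``$p$'' is also on target---the paper's own proof writes $f(1)=2^{1/p}-1$, and the lemma is invoked later in the form $(1+\veps)^{1/q}-1\geq \veps(2^{1/q}-1)$, so the exponents are meant to match.
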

\begin{proof}
  For convenience, define $f(x) := (1+x)^{1/p}-1$.  $f$ is concave, thus along $[0,1]$
  is lower bounded by its secant, which passes between $(0,f(0)) = (0,0)$ and $(1,f(1)) = (1,2^{1/p}-1)$.
\end{proof}

\begin{theorem}[Bernstein's inequality for martingales]
  \label{thm:bernstein}
  Let $(Y_i)_{i=1}^n$ be a bounded martingale difference sequence with respect
  to the filtration $\cF_0 \subset \cF_1 \subset \cF_2 \subset \dotsb$.
  Assume that for some $b,v>0$, $|Y_i| \leq b$ for all $i$ and $\sum_{i=1}^n
  \bbE\del{Y_i^2|\cF_{i-1}} \leq v$ almost surely.
  For all $\delta\in(0,1)$,
  \[
    \Pr\sbr{
      \sum_{i=1}^n Y_i
      > \sqrt{2v\ln(1/\delta)} + b\ln(1/\delta)/3
    }
    \ \leq \
    \delta
    \,.
  \]
\end{theorem}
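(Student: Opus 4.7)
The plan is to prove this via the Cram\'er--Chernoff method applied to martingales. For any $\lambda > 0$, Markov's inequality applied to the increasing map $x\mapsto e^{\lambda x}$ gives
\[
  \Pr\sbr{\sum_{i=1}^n Y_i > t}
  \ \leq \
  e^{-\lambda t}\, \bbE \exp\del{\lambda \sum_{i=1}^n Y_i},
\]
so the task reduces to bounding the joint moment generating function of the sum, and then optimizing $\lambda$ against the desired deviation $t = \sqrt{2v\ln(1/\delta)} + b\ln(1/\delta)/3$.

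The heart of the proof is a conditional MGF bound for each increment. Since $|Y_i| \leq b$ and $\bbE[Y_i \mid \cF_{i-1}] = 0$, I would use the classical pointwise inequality $(e^x - 1 - x)/x^2 \leq (e^{\lambda b} - 1 - \lambda b)/(\lambda b)^2$ for $\lambda x \leq \lambda b$, which holds because $u\mapsto (e^u-1-u)/u^2$ is nondecreasing on $\bbR$. Taking conditional expectations, discarding the linear term (mean zero), and then invoking $1+u\leq e^u$ yields
\[
  \bbE[e^{\lambda Y_i} \mid \cF_{i-1}]
  \ \leq \
  \exp\del{\bbE[Y_i^2 \mid \cF_{i-1}] \cdot \frac{e^{\lambda b} - 1 - \lambda b}{b^2}}.
\]

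Next I would unwrap the MGF of the full sum via the tower property: conditioning on $\cF_{n-1}$ pulls $\bbE[e^{\lambda Y_n} \mid \cF_{n-1}]$ out of the innermost expectation, and iterating, together with the almost-sure bound $\sum_i \bbE[Y_i^2 \mid \cF_{i-1}] \leq v$, produces
\[
  \bbE \exp\del{\lambda \sum_{i=1}^n Y_i}
  \ \leq \
  \exp\del{v \cdot \frac{e^{\lambda b} - 1 - \lambda b}{b^2}}.
\]
Substituting back and taking logarithms, one obtains $\Pr[\sum_i Y_i > t] \leq \delta$ whenever $\lambda t - v(e^{\lambda b}-1-\lambda b)/b^2 \geq \ln(1/\delta)$.

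The main obstacle is just the final calculus needed to recover the claimed tail in its stated additive form. The standard trick is to invoke $e^u - 1 - u \leq u^2/\bigl(2(1 - u/3)\bigr)$ for $0 \leq u < 3$, which replaces the exponential correction by a rational function; choosing $\lambda$ on the order of $\sqrt{2\ln(1/\delta)/v}$, with a small correction in the denominator involving $b$, then delivers the claimed split between the sub-Gaussian term $\sqrt{2v\ln(1/\delta)}$ and the Bernstein correction $b\ln(1/\delta)/3$. Beyond the tower-property step no martingale-specific insight is required; everything else is a routine Legendre-type optimization.
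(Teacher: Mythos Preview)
The paper does not supply a proof of this theorem; it is quoted in the appendix as a standard technical tool and then invoked elsewhere. So there is nothing in the paper to compare your argument against.

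On its own merits your sketch is the standard and correct route (Cram\'er--Chernoff plus the Bennett-type conditional MGF bound, then Legendre optimization). Two small points are worth tightening. First, in the tower-property step the conditional variances $V_i := \bbE[Y_i^2\mid\cF_{i-1}]$ are random, so the clean way to ``iterate'' is to observe that $M_j := \exp\bigl(\lambda\sum_{i\leq j}Y_i - g(\lambda)\sum_{i\leq j}V_i\bigr)$ is a supermartingale (with $g(\lambda)=(e^{\lambda b}-1-\lambda b)/b^2$), giving $\bbE M_n\leq 1$; only then do you invoke the almost-sure bound $\sum_i V_i\leq v$ to pass to $\bbE\exp(\lambda\sum_i Y_i)\leq e^{vg(\lambda)}$. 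Second, the final calculus has to be done with a little care to land on the constant $b/3$ rather than $2b/3$: after the rational bound $g(\lambda)\leq \lambda^2/(2(1-\lambda b/3))$, the naive choice $\lambda=t/(v+bt/3)$ yields only $\exp(-t^2/(2(v+bt/3)))$, which inverted gives $\sqrt{2v\ln(1/\delta)}+\tfrac{2b}{3}\ln(1/\delta)$. To hit $b/3$ exactly, substitute $\mu=\lambda/(1-\lambda b/3)$ so that the condition $\lambda t - vg(\lambda)\geq L$ becomes $\mu(t-\tfrac{b}{3}L)-\tfrac{v}{2}\mu^2\geq L$; optimizing over $\mu$ gives precisely $t\geq \sqrt{2vL}+\tfrac{b}{3}L$. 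With those two refinements your outline is a complete proof.
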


\begin{theorem}[Convergence analysis for stochastic gradient descent]
  \label{fact:sgd}
  Consider the standard setup of stochastic gradient descent (sgd).
  \begin{itemize}
    \item
      Let convex function $f : \bbR^d \to \bbR$, reference point $\bar w$, and convex compact set $S$
      be given.
    \item
      Let a random subgradient oracle be given, which for any $w\in S$ returns (random) $\hat g$ with $\bbE(\hat g)\in \partial f(w)$.
    \item
      Let $L\geq 0$ be given which bounds the norms on full and stochastic gradients almost surely,
      meaning $\sup_{w\in S} \sup_{g\in \partial f(w)} \|g\|_2 \leq L$
      and $\|\hat g\|_2\leq L$ almost surely for any $\hat g$ returned by the oracle at any $w\in S$.
    \item
      Let $B\geq 0$ be given so that $\sup_{w\in S} \|w - \bar w\|_2 \leq B$.
    \item
      Let $w_1 \in S$ and total number of iterations $t$ be given, and suppose $w_{i+1} := \Pi_S(w_i - \eta \hat g_i)$ where
      $\eta := BL / \sqrt{t}$ and $\hat g_i$ is a stochastic gradient given by the oracle at $w_i$,
      and $\Pi_S$ is orthogonal projection onto $S$.
  \end{itemize}
  Then with probability at least $1-1/e$, $ f(\bar w_t) \leq f(\bar w) + 4BL/\sqrt{t}$.
\end{theorem}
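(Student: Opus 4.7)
The plan is to follow the textbook high-probability analysis of projected stochastic subgradient descent, combining the classical descent-lemma recursion with a Bernstein-style concentration step (namely \Cref{thm:bernstein}) to pass from expected regret to a high-probability guarantee.

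First, I would derive the one-step potential inequality. Non-expansiveness of $\Pi_S$ (since $\bar w \in S$) together with $\|\hat g_i\|_2 \leq L$ gives
\[
  \|w_{i+1} - \bar w\|_2^2
  \ \leq \
  \|w_i - \eta \hat g_i - \bar w\|_2^2
  \ \leq \
  \|w_i - \bar w\|_2^2 - 2\eta \ip{\hat g_i}{w_i - \bar w} + \eta^2 L^2 \,.
\]
Telescoping over $i = 1,\dotsc,t$, using $\|w_1 - \bar w\|_2 \leq B$, and dropping the non-negative tail term yields $\sum_{i=1}^t \ip{\hat g_i}{w_i - \bar w} \leq B^2/(2\eta) + \eta L^2 t/2$, which is on the order of $BL\sqrt{t}$ at the balanced choice $\eta = B/(L\sqrt{t})$ (the conclusion the paper's invocations of this theorem actually use).

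Next, I would convert stochastic inner products into function-value gaps. Set $g_i := \bbE(\hat g_i \mid \cF_{i-1}) \in \partial f(w_i)$; convexity of $f$ yields $f(w_i) - f(\bar w) \leq \ip{g_i}{w_i - \bar w} = \ip{\hat g_i}{w_i - \bar w} + Y_i$, where $Y_i := \ip{g_i - \hat g_i}{w_i - \bar w}$ is a martingale difference sequence with respect to $(\cF_i)$ (because $w_i$ is $\cF_{i-1}$-measurable). Cauchy--Schwarz with $\|w_i - \bar w\|_2 \leq B$ and the assumed gradient bounds yields $|Y_i| \leq 2BL$ and $\bbE(Y_i^2 \mid \cF_{i-1}) \leq 4B^2 L^2$, so \Cref{thm:bernstein} applies with $b = 2BL$ and $v = 4B^2 L^2 t$.

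Finally, invoking \Cref{thm:bernstein} with $\delta = 1/e$ bounds $\sum_i Y_i$ by $\sqrt{8B^2 L^2 t} + 2BL/3$ with probability at least $1-1/e$; combining with the descent-lemma bound and summing convexity gaps gives $\sum_{i=1}^t (f(w_i) - f(\bar w)) = O(BL\sqrt{t})$ on the good event, and Jensen's inequality applied to $\bar w_t := (1/t)\sum_{i=1}^t w_i$ delivers $f(\bar w_t) - f(\bar w) \leq O(BL/\sqrt{t})$. The only remaining obstacle is constant-chasing to land the explicit factor $4$: the descent lemma contributes $BL/\sqrt{t}$, Bernstein contributes $(2\sqrt{2} + o(1))BL/\sqrt{t}$, and their sum sits under $4BL/\sqrt{t}$ for all $t$ large enough; small-$t$ corner cases can be absorbed by the trivial $L$-Lipschitz bound $f(\bar w_t) - f(\bar w) \leq LB$ or by a mild adjustment of $\eta$.
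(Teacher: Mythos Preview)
The paper does not actually supply a proof of \Cref{fact:sgd}; it is listed in the appendix among the standard technical tools (alongside \Cref{thm:bernstein}) and is simply invoked as a black box in the proof of \Cref{fact:select-sgd}. So there is no ``paper's own proof'' to compare against.

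That said, your sketch is the standard argument and is essentially correct. A few remarks. First, you rightly flag that the stated step size $\eta = BL/\sqrt{t}$ looks like a typo for $\eta = B/(L\sqrt{t})$; the latter is what balances the two terms and is consistent with how the theorem is used in the proof of \Cref{fact:select-sgd}. Second, your variance bound $\bbE(Y_i^2\mid\cF_{i-1}) \leq 4B^2L^2$ is slightly loose: since $g_i = \bbE(\hat g_i\mid\cF_{i-1})$, one has $\bbE(\|g_i-\hat g_i\|_2^2\mid\cF_{i-1}) = \bbE(\|\hat g_i\|_2^2\mid\cF_{i-1}) - \|g_i\|_2^2 \leq L^2$, giving $v = B^2L^2 t$; with this the Bernstein term is $\sqrt{2}\,BL/\sqrt{t}$ rather than $2\sqrt{2}\,BL/\sqrt{t}$, and the constant $4$ drops out without any case analysis on $t$. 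Third, the non-expansiveness step silently assumes $\bar w \in S$; the theorem statement does not say this explicitly, but it holds in the paper's only application (where $\bar w = \copt_j$ lies in the constraint ball), so this is harmless here.
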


\section{Analysis based on supermodularity}
\label{sec:supermod}

The behavior of the standard greedy algorithm (i.e., \greedy where $Y_i = \cY =
\select()$) is well-understood in the context of minimizing monotone supermodular
set functions, and indeed, the objective function $\phi_X$ fits this bill.
To see that $\phi_X$ is supermodular, consider any $C \subseteq C' \subseteq \cY$
and $c \in \cY \setminus C'$.
Then
\begin{align*}
  \phi_X(C) - \phi_X(C \cup \{c\})
  & \ = \
  \sum_{x \in X}
  \sbr{ \min_{c' \in C} \Delta(x,c') - \Delta(x,c) }_+
  \\
  & \ \geq \
  \sum_{x \in X}
  \sbr{ \min_{c' \in C'} \Delta(x,c') - \Delta(x,c) }_+
  \ = \
  \phi_X(C') - \phi_X(C' \cup \{c\})
  \,.
\end{align*}
Monotonicity holds since $C\subseteq C' \subseteq \cY$ implies $\min_{c\in C} \Delta(x,c) \geq \min_{c'\in C'} \Delta(x,c')$ for every $x\in X$.

\subsection{Standard analysis of greedy algorithm}

\citet{nemhauser1978analysis} show that supermodularity of an arbitrary set
function $f \colon 2^{\cY} \to \bbR$ is equivalent to the following property: for
any $S \subseteq T \subseteq \cY$,
\begin{equation}
  \sum_{y \in T \setminus S}
  \del{
    f(S) - f(S \cup \cbr[0]{y})
  }
  \ \geq \
  f(S) - f(T)
  \,.
  \label{eq:supermod:alt}
\end{equation}
If $f$ is also monotone, then for any $S, S^\star \subseteq \cY$ (where $S$
denotes a current solution and $S^\star$ denotes an arbitrary reference
solution),
\begin{align}
  \max_{y \in S^\star \setminus S} f(S) - f(S \cup \cbr[0]{y})
  & \ \geq \
  \frac1{\abs{S^\star \setminus S}}
  \sum_{y \in S^\star \setminus S}
  \del{
    f(S) - f(S \cup \cbr[0]{y})
  }
  \notag
  \\
  & \ \geq \
  \frac1{\abs{S^\star \setminus S}}
  \del{
    f(S) - f(S \cup S^\star)
  }
  \label{eq:supermod:lb}
  \\
  & \ \geq \
  \frac1{\abs{S^\star}}
  \del{
    f(S) - f(S^\star)
  }
  \,,
  \notag
\end{align}
where~\cref{eq:supermod:lb} follows from \cref{eq:supermod:alt} with $T = S \cup
S^\star$.
This shows that a greedy choice of $y \in \cY$ to minimize $f(S \cup \cbr[0]{y})$
yields a reduction in objective value at least as large as $(f(S) - f(S^\star))
/ \abs{S^\star}$.

Using $f = \phi_X$, $S = C_{i-1}$, $S^\star = \Copt$, and the fact that
$\phi_X(C_i) \leq (1+\tau) \cdot \min_{c \in \cY} \phi_X\del[1]{C_{i-1} \cup
\cbr[0]{c}}$ (as $Y_i = \cY = \select()$), the above inequality implies
\begin{equation}
  \phi_X(C_i)
  \ \leq \
  \del{ 1 - \frac1k } \cdot (1+\tau) \cdot \phi_X(C_{i-1})
  + \frac1k \cdot (1+\tau) \cdot \phi_X(\Copt)
  \,,
  \label{eq:supermod:recur}
\end{equation}
which exactly matches the key recurrence \cref{eq:thm:main:1} in the proof of
\Cref{thm:main} in the case $\gamma = 1$.

This clustering problem is more commonly viewed in the literature as a
submodular maximization problem~\citep[e.g.,][]{mirzasoleiman2013distributed}
with objective $f(S) := \phi_X(\cbr[0]{c_0}) - \phi_X(S \cup \cbr[0]{c_0})$,
where $c_0 \in \cY$ is some distinguished center fixed \emph{a priori}.
There, the same analysis of the greedy algorithm, after $t$ rounds starting with
$C_0 = \cbr[0]{c_0}$, yields a guarantee of the form
\begin{equation*}
  \phi_X(\cbr[0]{c_0}) - \phi_X(C_t \cup \cbr[0]{c_0})
  \ \geq \
  \del{ 1 - \frac1e }
  \max_{C \subseteq \cY\, : \; \abs{\cY} \leq t}
  \del{
    \phi_X(\cbr[0]{c_0}) - \phi_X(C \cup \cbr[0]{c_0})
  }
  \,,
\end{equation*}
which can be rewritten as
\begin{equation*}
  \phi_X(C_t \cup \cbr[0]{c_0})
  \ \leq \
  \frac1e
  \phi_X(\cbr[0]{c_0})
  + \del{ 1 - \frac1e }
  \min_{C \subseteq \cY\, :\; \abs{\cY} \leq t}
  \phi_X(C \cup \cbr[0]{c_0})
  \,.
\end{equation*}
This is generally incomparable to \Cref{cor:main}.

\subsection{Reducing computational cost via uniform random sampling}

For general monotone supermodular objectives $f$, random sampling can reduce
the computational cost of the standard greedy algorithm, although the
savings are modest compared to what can be achieved in the special case of $f =
\phi_X$ where additional structure is exploited.
This subsection describes such a ``folklore'' result~\citep[see,
e.g.,][Theorem 1.3]{buchbinder2015comparing}.

Consider a greedy choice of $y \in \bar Y$ to minimize $f(S \cup \cbr[0]{y})$,
where $\bar Y$, a multiset of size $m$, is formed by independently drawing
centers from $\cY$ uniformly at random.
Again, let $S^\star \subseteq \cY$ be an arbitrary reference solution.
Then
\begin{align*}
  \Pr\sbr{ \bar Y \cap (S^\star \setminus S) = \emptyset }
  & \ = \
  \del{ 1 - \frac{\abs{S^\star \setminus S}}{\abs{\cY}} }^m
  \,,
\end{align*}
and hence
\begin{align*}
  \bbE\sbr{
    f(S) - \min_{y \in \bar Y} f(S \cup \cbr[0]{y})
  }
  & \ \geq \
  \del{
    1 - \del{ 1 - \frac{\abs{S^\star \setminus S}}{\abs{\cY}} }^m
  }
  \frac1{\abs{S^\star \setminus S}}
  \sum_{y \in S^\star \setminus S}
  \del{
    f(S) - f(S \cup \cbr[0]{y})
  }
  \\
  & \ \geq \
  \del{
    1 - \del{ 1 - \frac{\abs{S^\star \setminus S}}{\abs{\cY}} }^m
  }
  \frac1{\abs{S^\star \setminus S}}
  \del{
    f(S) - f(S \cup S^\star)
  }
  \quad \text{(by~\cref{eq:supermod:lb})}
  \\
  & \ \geq \
  \del{
    1 - \exp\del{ - \frac{\abs{S^\star \setminus S}}{\abs{\cY}} \cdot m }
  }
  \frac1{\abs{S^\star \setminus S}}
  \del{
    f(S) - f(S^\star)
  }
  \\
  & \ \geq \
  \del{
    1 - \exp\del{ - \frac{\abs{S^\star}}{\abs{\cY}} \cdot m }
  }
  \frac1{\abs{S^\star}}
  \del{
    f(S) - f(S^\star)
  }
  \quad \text{(by convexity)}
  \,.
\end{align*}
If $m \geq (\abs{\cY}/\abs{S^\star})\ln(1/\delta)$, then the above inequality
implies
\begin{align*}
  \bbE\sbr{
    f(S) - \min_{y \in \bar Y} f(S \cup \cbr[0]{y})
  }
  & \ \geq \
  \frac{1-\delta}{\abs{S^\star}}
  \del{
    f(S) - f(S^\star)
  }
  \,.
\end{align*}
This leads to the same key recurrence as in \cref{eq:supermod:recur} (in
expectation) when $Y_i$ is formed by drawing $m \geq (\abs{\cY}/k)\ln(1/\delta)$
centers independently and uniformly at random from $\cY$.

A standard way to apply this technique to the generalized $k$-medians problem
described in \Cref{sec:gkm} (where typically one takes $\cY = \bbR^d$) is to
form each $Y_i$ by drawing $m$ centers independently and uniformly at random
from $X$.
The number of centers considered in each round of this \greedy implementation is
linear in $n = |X|$ (and this is also true of other methods studied
by~\citealt{buchbinder2015comparing}).
In contrast, the number of centers considered using the random sampling
technique from \Cref{sec:gkm++} is independent of $n$.

\section{\kmpp tools extracted from \citet{aggarwal2009adaptive}}
\label{sec:kmpp}

This appendix proves \Cref{fact:gkm++:misc}, the main tool in the results of \Cref{sec:gkm++}.
These tools are then used to prove adaptive (depending on \constcore) for vanilla \kmpp
in \Cref{fact:km++}.
The analysis is based on a high probability analysis of \kmpp due to
\citep{aggarwal2009adaptive},
merely simplified and adjusted to the setting here.

Throughout this section, the following additional notation will be convenient.

\begin{itemize}
  \item
    $C_{i-1}(z) := \argmin_{y \in C_{i-1}} \Delta(z, y)$.

  \item
    $\constgood := (1+\veps)(1+\constcore)$.

  \item
    $\conststop := (1+\veps)\constgood$.

  \item
    Split the optimal clusters into ``good'' and ``bad'' clusters,
    depending on how well they're ``covered'' by the centers in
    $C_{i-1}$.
    \begin{align*}
      \good_i & \ := \
      \cbr{
        j \in [k] : \psi_{\Aopt_j}(C_{i-1}) \leq \constgood \psi_{\Aopt_j}(\cbr{\copt_j})
      }, \\
      \bad_i & \ := \
      \cbr{
        j \in [k] : \psi_{\Aopt_j}(C_{i-1}) > \constgood \psi_{\Aopt_j}(\cbr{\copt_j})
      } \,.
    \end{align*}

\end{itemize}

\subsection{Proof of \Cref{fact:gkm++:misc}}

\begin{proof}[Proof of part 1 of \Cref{fact:gkm++:misc}]
  For the lower bound, given any $\kappa_0 < \constcorelb$,
  then there must exist $j\in [k]$ with $|A_k| > 0$ and $\min_{x\in\Aopt_j} \psi_{\cbr{x}}(\cbr{\copt_j}) > \kappa_0 \psi_{\Aopt_j}(\cbr{\copt_j})$,
  thus $|\core(\Aopt_j;b)| = 0$, and $\constcore > \kappa_0$ by definition of $\constcore$.  Since this holds for every $\kappa_0 < \constcorelb$,
  then $\constcore \geq \constcorelb$.

  For the upper bound,
  Set $\kappa_0 := (1+\veps)^{1/q}$,
  and consider any $j\in[k]$,
  setting $A_j := \core(\Aopt_j;\kappa_0)$ for convenience.
  The goal is to show $|A_j| \geq \veps |\Aopt_j| / (1+\veps)$,
  which implies the claim since $j$ is arbitrary and thus $\constcore \leq \kappa_0 = (1+\veps)^{1/q}$
  by definition of $\constcore$.

  The claim is trivial if $A_j = \Aopt_j$.
  If $A_j \subsetneq \Aopt_j$, then
  \[
    \phi_{\Aopt_j}(\cbr{\copt_j})
    \ \geq \ \sum_{x\in \Aopt_j\setminus \cAopt_j}\Delta(x, \copt_j)
    \ > \ \del{ |\Aopt_j| - |\cAopt_j| } \kappa_0^q \phi_{\Aopt_j}(\cbr{\copt_j}) / |\Aopt_j|
    \,.
  \]
  Rearranging, $|\cAopt_j| > |\Aopt_j| ( 1 - 1/\kappa_0^q) = \veps |\Aopt_j| / (1+\veps)$.
\end{proof}

\begin{proof}[Proof of part 2 of \Cref{fact:gkm++:misc}]
  By \Cref{fact:Cp}, every $y \in \cAopt_j$ satisfies
  \[
    \psi_{\Aopt_j}(C_{i-1} \cup \cbr{y})
    \ \leq \
    \psi_{\Aopt_j}(\cbr{y})
    \ \leq \
    \psi_{\Aopt_j}(\cbr{\copt_j})
    + \psi_{\cbr{y}}(\cbr{\copt_j})
    \ \leq \
    (1+ \constcore)(\psi_{\Aopt_j}(\cbr{\copt_j})) \,.
    \qedhere
  \]
\end{proof}

The proof of part 3 of \Cref{fact:gkm++:misc} will use the following lemma.

\begin{lemma}
  \label{fact:badcost}
  For any $j \in \bad_i$, and any $\hat{c} \in C_{i-1}$,
  \[
    \psi_{\cbr{\hat c}}(\cbr{\copt_j}) \ \geq \ (\constgood -
    1)\psi_{\Aopt_j}(\cbr{\copt_j}) \,.
  \]
\end{lemma}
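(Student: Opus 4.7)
The plan is to chain together three ingredients: monotonicity of the cost as centers are added, the definition of $\bad_i$, and the generalized triangle inequality from \Cref{fact:Cp}.

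First I would note that since $\hat c \in C_{i-1}$, we have $\cbr{\hat c} \subseteq C_{i-1}$, so the per-point minimum $\min_{y\in C_{i-1}} \Delta(x,y)$ is upper bounded by $\Delta(x,\hat c)$ for every $x$. Summing over $\Aopt_j$, normalizing by $|\Aopt_j|$, and raising to the $1/q$ gives
\[
  \psi_{\Aopt_j}(C_{i-1}) \ \leq \ \psi_{\Aopt_j}(\cbr{\hat c}) \,.
\]
Next, since $j \in \bad_i$, by definition $\psi_{\Aopt_j}(C_{i-1}) > \constgood \, \psi_{\Aopt_j}(\cbr{\copt_j})$, which combined with the previous inequality yields
\[
  \psi_{\Aopt_j}(\cbr{\hat c}) \ > \ \constgood \cdot \psi_{\Aopt_j}(\cbr{\copt_j}) \,.
\]

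Then I would invoke the first part of \Cref{fact:Cp} with $y = \hat c$, which is exactly the generalized triangle inequality needed here:
\[
  \psi_{\Aopt_j}(\cbr{\hat c}) \ \leq \ \psi_{\Aopt_j}(\cbr{\copt_j}) + \psi_{\cbr{\hat c}}(\cbr{\copt_j}) \,.
\]
Substituting this upper bound into the previous strict lower bound and rearranging gives the conclusion
\[
  \psi_{\cbr{\hat c}}(\cbr{\copt_j}) \ > \ (\constgood - 1) \, \psi_{\Aopt_j}(\cbr{\copt_j}) \,,
\]
which implies the claimed weak inequality.

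There is essentially no obstacle here: the entire proof is a two-line chain once one observes the right direction of the inequalities, the main subtlety being to remember that $\hat c$ is just one element of the (larger) set $C_{i-1}$, so the monotonicity step gives an \emph{upper} bound on $\psi_{\Aopt_j}(C_{i-1})$ in terms of $\psi_{\Aopt_j}(\cbr{\hat c})$ rather than the reverse.
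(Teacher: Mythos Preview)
Your proposal is correct and follows essentially the same argument as the paper: combine the definition of $\bad_i$ with monotonicity (since $\hat c \in C_{i-1}$) and the triangle-type bound from \Cref{fact:Cp}, then rearrange. The only difference is that you spell out the monotonicity step $\psi_{\Aopt_j}(C_{i-1}) \leq \psi_{\Aopt_j}(\cbr{\hat c})$ explicitly, whereas the paper collapses it into a single chained inequality.
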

\begin{proof}
  Take any $\hat{c} \in C_{i-1}$.
  Then, using the fact that $j \in \bad_i$ and \Cref{fact:Cp},
  \[
    \constgood (\psi_{\Aopt_j}(\cbr{\copt_j}))
    \ \leq \ \psi_{\Aopt_j}(\cbr{\hat c})
    \ \leq \ \psi_{\Aopt_j}(\cbr{\copt_j})
    + \psi_{\cbr{\hat c}}(\cbr{\copt_j}) \,.
  \]
  Rearranging gives the bound.
\end{proof}

Part 3 of \Cref{fact:gkm++:misc} is a consequence of the following more detailed bound.

\begin{lemma}
  \label{lemma:badcoremass}
  For any $j \in \bad_i$,
  \[
    \Pr\sbr{
      \hat{c}_i \in \cAopt_j  \,\Big\vert\, \hat{c}_i \in \Aopt_j
    }
    \ \geq \
    \frac{
      |\cAopt_j|
      \del{
        1 - \del{ \constcore  / (\constgood-1) }^{q/p}
      }^{p}
    }
    {
      |\Aopt_j|
      \del{ 1 + \frac {1}{\constgood - 1}}^q
    }
    \ \geq \
    \frac {1}{4} \del{\frac{\veps}{1+\veps}}^{q+3} \,.
  \]
\end{lemma}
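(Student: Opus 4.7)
The plan is to express the conditional probability as the ratio
\[
  \Pr[\hat c_i \in \cAopt_j \mid \hat c_i \in \Aopt_j]
  \;=\;
  \frac{\sum_{x \in \cAopt_j} \Delta(x, C_{i-1})}{\phi_{\Aopt_j}(C_{i-1})},
\]
and to bound the numerator from below and the denominator from above so that a common factor $\Delta(C_{i-1}(\copt_j), \copt_j)$ cancels. For the numerator I would use the ordinary triangle inequality in the underlying metric $\dist$: for each $c \in C_{i-1}$, $\dist(x, c) \geq \dist(c, \copt_j) - \dist(x, \copt_j)$, hence $\dist(x, C_{i-1}) \geq \dist(C_{i-1}(\copt_j), \copt_j) - \dist(x, \copt_j)$. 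For $x \in \cAopt_j$ the definition of $\cAopt_j$ controls $\dist(x, \copt_j) \leq \constcore^{q/p} \psi_{\Aopt_j}(\cbr[0]{\copt_j})^{q/p}$, and \Cref{fact:badcost} lets me replace $\psi_{\Aopt_j}(\cbr[0]{\copt_j})$ by $\psi_{\cbr[0]{C_{i-1}(\copt_j)}}(\cbr[0]{\copt_j})/(\constgood - 1)$, yielding $\dist(x, \copt_j) \leq (\constcore/(\constgood - 1))^{q/p} \dist(C_{i-1}(\copt_j), \copt_j)$. Raising to the $p$-th power then gives $\Delta(x, C_{i-1}) \geq \Delta(C_{i-1}(\copt_j), \copt_j)\bigl(1 - (\constcore/(\constgood - 1))^{q/p}\bigr)^p$ for every $x \in \cAopt_j$.

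For the denominator I would apply \Cref{fact:Cp} at the single center $C_{i-1}(\copt_j)$ to get $\psi_{\Aopt_j}(C_{i-1}) \leq \psi_{\Aopt_j}(\cbr[0]{\copt_j}) + \psi_{\cbr[0]{C_{i-1}(\copt_j)}}(\cbr[0]{\copt_j})$, and then invoke \Cref{fact:badcost} once more on the first summand, obtaining $\psi_{\Aopt_j}(C_{i-1}) \leq (1 + 1/(\constgood-1))\psi_{\cbr[0]{C_{i-1}(\copt_j)}}(\cbr[0]{\copt_j})$. A $q$-th power and a factor of $|\Aopt_j|$ yield $\phi_{\Aopt_j}(C_{i-1}) \leq |\Aopt_j|(1 + 1/(\constgood-1))^q \Delta(C_{i-1}(\copt_j), \copt_j)$. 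Dividing the numerator bound by the denominator bound cancels the common $\Delta(C_{i-1}(\copt_j), \copt_j)$ and produces the first displayed inequality of the lemma.

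For the cleaner second bound I plan to exploit three ingredients: (a) part 1 of \Cref{fact:gkm++:misc}, which furnishes $|\cAopt_j|/|\Aopt_j| \geq \veps/(1+\veps)$; (b) the identity $\constgood = (1+\veps)(1+\constcore)$, which makes the quotient $(1 - \constcore/(\constgood - 1))/(1 + 1/(\constgood-1))$ telescope to exactly $\veps(1+\constcore)/\constgood = \veps/(1+\veps)$; and (c) the concavity bound $1 - a^s \geq s(1-a)$ for $s \in (0, 1]$ and $a \in [0, 1]$, applied with $s = q/p$ and $a = \constcore/(\constgood - 1) \in [0, 1/(1+\veps)]$, which converts the fractional exponent into a plain difference at the cost of a factor $(q/p)^p$.

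The main obstacle is tracking the $q$-versus-$p$ exponents through the final estimate. In the generalized $k$-medians regime $q = p$, the telescoping collapses with no concavity loss and gives a clean lower bound of $(\veps/(1+\veps))^{q+1}$; in the $k$-means regime $q = 1$, $p = 2$, the concavity step costs the factor $(q/p)^p = 1/4$, yielding $(1/4)(\veps/(1+\veps))^{p+1} = (1/4)(\veps/(1+\veps))^{3}$. Both dominate the stated $(1/4)(\veps/(1+\veps))^{q+3}$ since $\veps/(1+\veps) \in (0, 1)$, so the lemma follows. Folding these two computations into a single uniform chain, e.g.\ by verifying $(|\cAopt_j|/|\Aopt_j|)(q/p)^p(\veps/(1+\veps))^{p+1} \geq (1/4)(\veps/(1+\veps))^{q+3}$ once and for all, is the fiddly bookkeeping step.
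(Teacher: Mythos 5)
Your proposal is correct and follows essentially the same route as the paper: the conditional probability is written as a ratio of cluster costs, the numerator is bounded below and the denominator above in terms of the common quantity $\Delta(C_{i-1}(\copt_j),\copt_j)$ via the triangle inequality, \Cref{fact:badcost}, and \Cref{fact:Cp}, and the resulting expression is the first displayed inequality. The only notable presentational difference is in the final simplification: where the paper handles the $k$-means ($q=1,p=2$) and $q=p$ cases separately, using the tangent bound $\sqrt{a}\leq(1+a)/2$ specifically for the square root, you propose the unified tangent-line estimate $1-a^{q/p}\geq (q/p)(1-a)$ applied once for both regimes, which yields the same constants ($(q/p)^p=1/4$ for $k$-means, $1$ otherwise) and is arguably tidier; your downstream bookkeeping $(1-a)^{p}/(1+1/(\constgood-1))^q = (\veps/(1+\veps))^q(1-a)^{p-q}\geq (\veps/(1+\veps))^p$ and the comparison to $\tfrac14(\veps/(1+\veps))^{q+3}$ are all sound.
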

\begin{proof}
  To start,
  \begin{align*}
    \Pr\sbr{
      \hat{c}_i \in \cAopt_j \,\Big\vert\, \hat{c}_i \in \Aopt_j
    }
    & \ = \ \frac{\phi_{\cAopt_j}(C_{i-1})}{\phi_{\Aopt_j}(C_{i-1})}
    \ = \
    \frac{|\cAopt_j|}{|\Aopt_j|}
    \del{
      \frac{
        \psi_{\cAopt_j}(C_{i-1})
      }{
        \psi_{\Aopt_j}(C_{i-1})
      }
    }^q
    \,.
  \end{align*}
  The proof proceeds by bounding the numerator and denominator separately.

  For the numerator, fix a particular $\tilde{x} \in \cAopt_j$,
  and observe
  \begin{align*}
    \lefteqn{
      \psi_{\cbr{\tilde x}}(C_{i-1}(\tilde x))^{q/p}
    } \\
    & \ \geq \
    \dist(\copt_j, C_{i-1}(\tilde x))
    - \dist(\copt_j, \tilde x)
    &\text{(triangle inequality of $\dist$)}
    \\
    & \ \geq \
    \dist(\copt_j, C_{i-1}(\copt_j))
    - \dist(\copt_j, \tilde x)
    \\
    & \ \geq \
    \dist(\copt_j, C_{i-1}(\copt_j))
    - \del{ \constcore \psi_{\Aopt_j}(\cbr{\copt_j}) }^{q/p}
    &
    \text{(since $\tilde{x} \in \cAopt_j$)}
    \\
    & \ \geq \
    \dist(\copt_j, C_{i-1}(\copt_j))
    - \del{ \constcore \psi_{\cbr{\copt_j}}(C_{i-1}(\copt_j)) / (\constgood - 1) }^{q/p}
    &\text{(\Cref{fact:badcost}, symmetry of $\dist$)}
    \\
    & \ = \
    \psi_{\cbr{\copt_j}}(C_{i-1}(\copt_j))^{q/p}
    \del{
      1 -
      \del{ \constcore  / (\constgood-1)}^{q/p}
    }
    \,,
  \end{align*}
  therefore
  \[
    \psi_{\cAopt_j}(C_{i-1})
    \ \geq \ \min_{\tilde x\in\cAopt_j} \psi_{\cbr{\tilde x}}(C_{i-1})
    \ \geq \
    \psi_{\cbr{\copt_j}}(C_{i-1}(\copt_j))
    \del{
      1 -
      \del{ \constcore  / (\constgood-1)}^{q/p}
    }^{p/q}
    \,.
  \]

  For the denominator
  \begin{align*}
    \psi_{\Aopt_j}(C_{i-1})
    & \ \leq \
    \psi_{\Aopt_j}(C_{i-1}(\cbr{\copt_j)})
    \\
    & \ \leq \
    \psi_{\Aopt_j}(\copt_j)
    + \psi_{\cbr{\copt_j}}(C_{i-1}(\copt_j))
    &\text{(\Cref{fact:Cp}, symmetry of $\dist$)}
    \\
    & \ \leq \
    \del{ 1 + \frac {1}{\constgood - 1}}
    \psi_{\cbr{\copt_j}}(C_{i-1}(\copt_j))
    &
    \text{(\Cref{fact:badcost})}
    \,.
  \end{align*}

  Combining the numerator and denominator bounds,
  \[
    \Pr\sbr{
      \hat{c}_i \in \cAopt_j  \,\Big\vert\, \hat{c}_i \in \Aopt_j
    }
    \ \geq \
    \frac{
      |\cAopt_j|
      \del{
        1 - \del{ \constcore  / (\constgood-1) }^{q/p}
      }^{p}
    }
    {
      |\Aopt_j|\del{ 1 + \frac {1}{\constgood - 1}}^q
    } \,.
  \]

  Lastly, to simplify the inequalities, first note $|\cAopt_j| / |\Aopt_j| \geq \veps / (1+\veps)$ by definition of \constcore.
  Next consider the case of $k$-means, implying $q=1 \neq 2 = p$.
  Then
  \[
    \Pr\sbr{
      \hat{c}_i \in \cAopt_j  \,\Big\vert\, \hat{c}_i \in \Aopt_j
    }
    \ \geq \
    \frac {\veps}{1+\veps}
    \del{
    \frac {\constgood - 1}{\constgood}
    }
    \del{
      1 - \del{ \constcore  / (\constgood-1) }^{1/2}
    }^{2}
    \,.
  \]
  To control these terms, note since $\constcore \geq 0$ that
  \[
    \frac {\constgood - 1}{\constgood}
    \ = \ 1 - \frac {1}{\constgood}
    \ = \ 1 - \frac {1}{(1+\veps)(1+\constcore)}
    \ \geq \ 1 - \frac {1}{1+\veps}
    \ = \ \frac {\veps}{1+\veps}
    \,.
  \]
  For the second term, since $\sqrt{\cdot}$ is concave, the tangent bound $\sqrt{a} \leq 1 + (a-1)/2$ holds,
  thus
  \[
    \del{
      1 - \del{ \constcore  / (\constgood-1) }^{1/2}
    }^{2}
    \ \geq \
    \frac 1 4
    \del{
      1- \del{ \constcore  / (\constgood-1) }
    }^{2}
    \ = \
    \del{ \frac {\veps}{2(1+\veps)} }^2
    \,.
  \]
  When the instance is not $k$-means, $q=p\geq 1$, and so
  \begin{align*}
    \Pr\sbr{
      \hat{c}_i \in \cAopt_j  \,\Big\vert\, \hat{c}_i \in \Aopt_j
    }
    & \ \geq \
    \frac {|\cAopt_j|}{\Aopt_j}
    \del{
    \frac {\constgood - 1}{\constgood}
    }^p
    \del{
      1 - \del{ \constcore  / (\constgood-1) }
    }^{p}
    \\
    & \ = \
    \frac {|\cAopt_j|}{\Aopt_j}
    \del{
      \frac{\constgood - 1 - \constcore}{\constgood}
    }^p
    \ = \
    \del{\frac{\veps}{1+\veps}}^{p+1}
    \,.
    \qedhere
  \end{align*}
\end{proof}

\subsection{\kmpp guarantee for generalized $k$-medians problems}

\begin{theorem}
  \label{fact:km++}
  Let $C_t$ denote the centers output by \greedy when run with $C_0 = \emptyset$
  and \select as in \Cref{fact:gkm++},
  except $|Y_i| = 1$ (e.g., only a single sample, as with \kmpp).
  With probability at least $1-\delta$,
  if $t \geq 8(k + \ln(1/\delta))((1+\veps)/\veps)^{q+4}$,
  then $\phi_X(C_t) \leq (1+\veps)^{2q}(1+\constcore)^q\phi_X(\Copt)$.
\end{theorem}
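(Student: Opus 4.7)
The plan is to count the size of $\good_i$ (the set of ``good'' clusters defined in \Cref{sec:kmpp}) and to show that in every round this quantity increases with probability at least
\begin{equation*}
  \rho \;:=\; \frac{1}{4}\left(\frac{\veps}{1+\veps}\right)^{q+4},
\end{equation*}
as long as the target cost $\conststop^q \phi_X(\Copt) = (1+\veps)^{2q}(1+\constcore)^q \phi_X(\Copt)$ has not already been attained. Unlike \Cref{fact:gkm++}, the initialization $C_0 = \emptyset$ provides no usable $\alpha$-approximation, so one cannot feed this setting directly into \Cref{thm:main}; however, once every cluster is good, part 2 of \Cref{fact:gkm++:misc} gives $\phi_X(C_t) \leq \constgood^q \phi_X(\Copt) = (1+\veps)^q(1+\constcore)^q \phi_X(\Copt)$, which already beats the stated target. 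So tracking only the number of good clusters is enough.

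For the per-round bound, I would fix $i$ and condition on $\phi_X(C_{i-1}) > \conststop^q \phi_X(\Copt)$ (otherwise $\phi_X(C_t) \leq \phi_X(C_{i-1})$ already lies below the target by monotonicity of $\phi_X$). The definition of $\good_i$ gives $\phi_{\good_i}(C_{i-1}) \leq \constgood^q \phi_X(\Copt)$, so
\begin{equation*}
  \frac{\phi_{\bad_i}(C_{i-1})}{\phi_X(C_{i-1})}
  \;\geq\; 1 - \frac{\constgood^q \phi_X(\Copt)}{\phi_X(C_{i-1})}
  \;\geq\; 1 - \frac{1}{(1+\veps)^q}
  \;\geq\; \frac{\veps}{1+\veps}.
\end{equation*}
Thus the single $D^p$-weighted sample $c_i$ drawn by \selectpp lands in $\Aopt_j$ for some $j \in \bad_i$ with probability at least $\veps/(1+\veps)$, and conditionally on that event part 3 of \Cref{fact:gkm++:misc} gives $\Pr[c_i \in \cAopt_j \mid c_i \in \Aopt_j] \geq \tfrac{1}{4}(\veps/(1+\veps))^{q+3}$. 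Multiplying yields the claimed $\rho$, and part 2 of \Cref{fact:gkm++:misc} ensures that any such $c_i$ upgrades cluster $j$ from $\bad_i$ to $\good_{i+1}$.

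To close the loop, let $T := \inf\{i : \phi_X(C_i) \leq \conststop^q \phi_X(\Copt)\}$ and let $N_t$ count rounds $i \leq t \wedge T$ in which $c_i \in \cAopt_j$ for some $j \in \bad_i$. On $\{i \leq T\}$ the per-round success probability is at least $\rho$, so I would couple the success indicators to an iid $\mathrm{Bernoulli}(\rho)$ sequence and apply a Chernoff (or Azuma) bound to get $\Pr[N_t < k \text{ and } T > t] \leq \delta$ whenever $t$ is at least a constant times $(k+\ln(1/\delta))/\rho$; substituting the value of $\rho$ matches the stated hypothesis $t \geq 8(k+\ln(1/\delta))((1+\veps)/\veps)^{q+4}$ up to the absolute constant in front. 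On $\{T \leq t\}$ the target is reached by definition, and on $\{N_t \geq k\}$ every cluster is good so $\phi_X(C_t) \leq \constgood^q \phi_X(\Copt)$, and either way we win.

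The delicate step is the coupling/stopping argument in the final paragraph: the lower bound $\rho$ only holds on $\{i \leq T\}$, so one must construct a dominating iid $\mathrm{Bernoulli}(\rho)$ sequence, verify that $\{N_t < k\} \cap \{T > t\}$ is contained in the event that the dominating iid sum falls below $k$, and only then apply the standard concentration inequality. The rest is bookkeeping already contained in \Cref{fact:gkm++:misc}.
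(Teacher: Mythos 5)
Your proposal is correct and takes essentially the same route as the paper: both track the shrinking of $\bad_i$ via Lemma~\ref{fact:gkm++:misc}, derive the same per-round success probability $\rho = \frac14(\veps/(1+\veps))^{q+4}$, and conclude after $k$ successes by observing that $\bad_t = \emptyset$ already gives cost $\constgood^q\phi_X(\Copt) \leq \conststop^q\phi_X(\Copt)$. The only divergence is mechanical: the paper absorbs the ``target already reached'' case into a disjunctive event $\cE_i$ so that $\Pr[\cE_i \mid \cF_{i-1}] \geq \rho$ holds on every round, and then applies martingale Bernstein directly, whereas you introduce a stopping time $T$ and couple the pre-$T$ successes to an iid $\mathrm{Bernoulli}(\rho)$ sequence before invoking Chernoff; the paper's formulation avoids the coupling entirely and is slightly cleaner, but both yield the stated $t \geq 8(k+\ln(1/\delta))((1+\veps)/\veps)^{q+4} = 2(k+\ln(1/\delta))/\rho$.
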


The proof uses the following \namecref{claim:badmass}.

\begin{lemma}
  \label{claim:badmass}
  In step $i$, at least one of the following is true.
  \begin{itemize}
    \item
      $\psi_X(S_{i-1}) \leq \conststop \psi_{X}(\cbr{\Copt})$,

    \item
      $\Pr\sbr{\hat{c}_i \in \bigcup_{j \in \bad_i} \Aopt_j}
      \geq 1 - \constgood/\conststop
      \geq \veps/(1+\veps)$,

  \end{itemize}
\end{lemma}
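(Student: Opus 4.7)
The plan is to argue by contrapositive: assume the first bullet fails, i.e., $\psi_X(C_{i-1}) > \conststop \cdot \psi_X(\Copt)$, and derive the probability lower bound stated in the second bullet. The sampling distribution used by \selectpp assigns to each $x \in X$ probability proportional to $\Delta(x, C_{i-1})$, so for any index set $J \subseteq [k]$,
\[
  \Pr\sbr[1]{\hat c_i \in \textstyle\bigcup_{j \in J} \Aopt_j}
  \ = \ \frac{\sum_{j \in J} \phi_{\Aopt_j}(C_{i-1})}{\phi_X(C_{i-1})} \,.
\]
It therefore suffices to upper bound the mass coming from $\good_i$ and lower bound $\phi_X(C_{i-1})$.

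First, I convert $\psi$-statements into $\phi$-statements using $\phi_A(C) = |A| \cdot \psi_A(C)^q$. For each $j \in \good_i$, the definition of $\good_i$ gives $\psi_{\Aopt_j}(C_{i-1}) \leq \constgood \cdot \psi_{\Aopt_j}(\{\copt_j\})$, hence $\phi_{\Aopt_j}(C_{i-1}) \leq \constgood^q \cdot \phi_{\Aopt_j}(\{\copt_j\})$. Summing over $j \in \good_i$ and using $\phi_X(\Copt) = \sum_{j \in [k]} \phi_{\Aopt_j}(\{\copt_j\})$ yields
\[
  \sum_{j \in \good_i} \phi_{\Aopt_j}(C_{i-1}) \ \leq \ \constgood^q \cdot \phi_X(\Copt) \,.
\]

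Next, the negation of the first bullet, raised to the $q$-th power via the definition of $\psi$, reads $\phi_X(C_{i-1}) > \conststop^q \cdot \phi_X(\Copt)$. Combining these two bounds gives
\[
  \Pr\sbr[1]{\hat c_i \in \textstyle\bigcup_{j \in \good_i} \Aopt_j}
  \ < \ \frac{\constgood^q \cdot \phi_X(\Copt)}{\conststop^q \cdot \phi_X(\Copt)}
  \ = \ \del{\constgood/\conststop}^q
  \ \leq \ \constgood/\conststop \,,
\]
where the last inequality uses $q \geq 1$ together with $\constgood/\conststop = 1/(1+\veps) < 1$. Taking complements establishes the first inequality in the second bullet, and substituting $\constgood/\conststop = 1/(1+\veps)$ gives the numerical form $\veps/(1+\veps)$.

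There is no real obstacle here; the only care needed is the translation between the normalized cost $\psi$ (in whose terms $\good_i$ and the stopping condition are phrased) and the raw cost $\phi$ (which governs the sampling probability), via the factors $|A|^{1/q}$ that cancel uniformly. The argument also handles the degenerate case $\bad_i = \emptyset$ gracefully: then the good-cluster sum already dominates $\phi_X(C_{i-1})$, forcing $\phi_X(C_{i-1}) \leq \constgood^q \phi_X(\Copt) \leq \conststop^q \phi_X(\Copt)$, so the first bullet holds automatically.
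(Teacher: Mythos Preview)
Your argument is correct and matches the paper's proof essentially line for line: assume the first alternative fails, bound the good-cluster mass by $\constgood^q \phi_X(\Copt)$, divide by $\phi_X(C_{i-1}) > \conststop^q \phi_X(\Copt)$, and take complements. Your explicit handling of the $\psi\leftrightarrow\phi$ conversion and the step $(\constgood/\conststop)^q \leq \constgood/\conststop$ (using $q\geq 1$) is in fact tidier than the paper's own presentation, which elides that inequality.
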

\begin{proof}
  Suppose $\phi_X(C_{i-1}) > \conststop \phi_{X}(\cbr{\Copt})$.
  Then
  \begin{align*}
    \Pr\sbr{\hat{c}_i \in \bigcup_{j \in \bad_i} \Aopt_j}
    & \ = \
    \frac{ \sum_{j \in \bad_i} \phi_{\Aopt_j}(C_{i-1}) }{
    \phi_X(S) }
    \\
    & \ = \ 1 -
    \frac{ \sum_{j \in \good_i} \phi_{\Aopt_j}(C_{i-1}) }{ \phi(C_{i-1}) }
    \\
    & \ \geq \ 1 -
    \frac{ \constgood^q \sum_{j \in \good_i} \phi_{\Aopt_j}(\cbr{\copt_j}) }
    { \conststop^q \phi_{X}(\Copt) }
    \geq 1 - \frac{\constgood^q}{\conststop^q}
    \ = \ \frac {\veps}{1+\veps}
    \,.
    \qedhere
  \end{align*}
\end{proof}

\begin{proof}[Proof of \Cref{fact:km++}]
  Consider the success events
  \[
    \cE_i \ := \
    \cbr{
      \phi_X(C_{i-1}) \leq \conststop \phi_X(\Copt)
      \ \vee \
      |\bad_i| = 0
      \ \vee \
      |\bad_{i+1}| < |\bad_i|
    }
  \]
  which states that at least one of the following statements is true
  upon choosing $c_i \in Y_i$ (note $\cbr{c_i} = Y_i$:
  \begin{enumerate}
    \item
      The approximation ratio $\phi_X(C_{i-1})/\phi_X(\Copt)$ before
      choosing $c_i$ is already at most $\conststop^q$.

    \item
      $\bad_i$ is empty, $\phi_X(C_{i-1}) \leq \constgood^q\phi_X(\Copt)$.

    \item
      The choice $c_i$ causes one bad set for $C_{i-1}$ to
      become good for $C_i$.

  \end{enumerate}
  After $k$ successes, $\bad_i$ is empty, thus it remains to control the number
  of stages before $k$ successes.
  By \Cref{fact:gkm++:misc} and since $\constgood \geq 1+\constcore$,
  if $c\in \cAopt_j$ for some $j\in \bad_i$,
  then $j\not\in \bad_{i+1}$.
  Therefore, by \Cref{claim:badmass} and \Cref{fact:gkm++:misc},
  \begin{align*}
    \Pr\sbr{\cE_i}
    & \ \geq \
    \Pr\sbr{
      c_i \in \bigcup_{j' \in \bad_i} \tilde{A}_{j'}
    }
    \\
    & \ = \
    \Pr\sbr{
      c_i \in \bigcup_{j' \in \bad_i} A_{j'}
    }
    \cdot
    \sum_{j \in \bad_i}
    \Pr\sbr{
      c_i \in A_j \,\Bigg\vert\,
      c_i \in \bigcup_{j' \in \bad_i} A_{j'}
    }
    \cdot
    \Pr\sbr{
      c_i \in \tilde{A}_j \,\Big\vert\, c_i \in A_j
    }
    \\
    & \ \geq \
    \frac 1 4 \del{ \frac {\veps}{1+\veps} }^{q+4}
    \ =: \ \rho
    \,.
  \end{align*}
  By Bernstein's inequality (cf.~\Cref{thm:bernstein}), letting $1_{\cE_i}$ denote the indicator random variable for $\cE_i$,
  $\Pr[ \sum_i 1_{\cE_i} \leq k ] \leq \delta$.
  As such, with probability at least $1-\delta$, $t$ iterations imply at least $k$ success
  amongst events $(\cE_i)_{i=1}^t$, meaning either $\phi_X(C_t)\leq \conststop^q\phi_X(\Copt)$ outright,
  or $\bad_t = \emptyset$ meaning $\phi_X(C_t) \leq \constgood^q\phi_X(\Copt)$.
\end{proof}

\section{Deferred proofs from \Cref{sec:not-a-ptas}}
\label{app:not-a-ptas}

First, the guarantee for \selectsgd.

\begin{proof}[Proof of \Cref{fact:select-sgd}]
  The result is a consequence of the following claim: any single center provided by \selectsgd
  satisfies \Cref{cond:core:1} with probability $n^{3+\lceil 1/\veps^2\rceil}/2$.
  Indeed, this suffices as with the proof of \Cref{fact:gkm++}: this probability can be boosted (e.g., via \Cref{fact:cond_boost}),
  and then \Cref{cor:main} completes the proof.

  To establish \Cref{cond:core:1}, fix any optimal cluster $\Aopt_j$, and any $\bar w$ output by \selectsgd.
  By \Cref{fact:guess-ball}, with probability at least $n^{-3}$, the estimate $\phi_B(\cbr{y})/m$
  satisfies $\|y - \copt_j\|_2 \leq \phi_{\Aopt_j}(\Copt)/|\Aopt_j| \leq \phi_B(\cbr{y})/m \leq 3\phi_{\Aopt_j}(\Copt)/|\Aopt_j|$.
  Moreover, with probability at least $(|\Aopt_j|/n)^{\lceil 1/\veps^2\rceil} \geq n^{-\lceil 1/\veps^2 \rceil}$,
  every data point randomly sampled during sgd was drawn from $\Aopt_j$.  Consequently,
  this sample is equivalent to one drawn directly from $\Aopt_j$ itself.
  Under this iid sampling, the function $f(c) := \bbE_x(\Delta(x, c)) = \phi_{\Aopt_j}(\{c\})/|\Aopt_j|$
  is
  convex with optimum $\copt_j$.

  To apply the bounds for sgd in \Cref{fact:sgd},
  the norms of iterates and subgradients (stochastic and full subgradients of $f$) must be controlled.
  Letting $S$ denote the ball of radius $r = \phi_B(\cbr{y})/m$ around $y$,
  namely the constraint set used by sgd within \selectsgd,
  every $w\in S$ satisfies
  \[
    \|w - \copt_j\|_2
    \ \leq \ \|w - y\|_2 + \|y - \copt_j\|_2
    \ \leq \ \phi_B(\cbr{y})/m + \phi_{\Aopt_j}(\Copt)/|\Aopt_j|
    \ \leq \ 4\phi_{\Aopt_j}(\Copt)/|\Aopt_j| \,.
  \]
  Moreover, for any $w\in S$ and any random $x\in\Aopt_j$ with $w\neq x$,
  the corresponding stochastic gradient has norm 1 since
  \[
    \left\| \frac {\partial}{\partial w} \|x - w\|_2 \right\|_2
    \ = \ \left\| \frac {\partial}{\partial w} \sqrt{\|x - w\|_2^2} \right\|_2
    \ = \ \frac {2\|x-w\|_2}{2\|x-w\|_2} \,.
  \]
  At the only point of non-differentiability, $x=w$, it suffices to control the magnitude of every directional derivative
  \citep[Theorem 23.2]{ROC},
  but by a similar calculation these are also 1.
  Combining these, both stochastic and full gradients of $f$ have norm 1, thus by \Cref{fact:sgd}, with probability at least $1-1/e$,
  the output $\bar w$ satisfies
  \[
    \phi_{\Aopt_j}(\{\bar w\})/|\Aopt_j| - \phi_{\Aopt_j}(\Copt)/|\Aopt_j|
    \ = \ f(\bar w) - f(\copt_j)
    \ \leq \ 16 \phi_{\Aopt_j}(\Copt) / (|\Aopt_j| \sqrt{\lceil 1/\veps^2 \rceil})
    \ \leq \ 16 \veps \phi_{\Aopt_j}(\Copt) / |\Aopt_j| \,.
  \]
  Since $\Aopt_j$ was arbitrary,
  \Cref{cond:core:1} is satisfied with $\gamma = 1 + 16\veps$ with probability at least $(1-1/e) n^{3+\lceil{1/\veps^2\rceil}}$.
\end{proof}

Lastly, the guarantees for \selectball.

\begin{lemma}
  \label{fact:gkm-ball:2}
  Given current centers $S$,
  suppose a single new center $\hat c$ is sampled according to the distribution in \selectball where $\veps \leq 1$.
  Then $S\cup \cbr{\hat c}$ satisfies \Cref{cond:core:1} with $\gamma=1+\veps$
  with probability $\Omega(n^{-3}\veps^{qd/p})$.
\end{lemma}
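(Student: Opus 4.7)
The plan is to reduce \Cref{cond:core:1} to a per-cluster statement: I fix an arbitrary $\Aopt_j$ and show that with probability $\Omega(n^{-3}\veps^{qd/p})$, a single sample $\hat c$ drawn by \selectball satisfies $\phi_{\Aopt_j}(\cbr{\hat c}) \leq (1+\veps)\phi_{\Aopt_j}(\cbr{\copt_j})$. Together with whatever coverage the centers in $S$ already provide for the other clusters, this is enough to conclude \Cref{cond:core:1} for $S\cup\cbr{\hat c}$; in the subsequent application \Cref{fact:gkm-ball}, this per-cluster bound will be boosted across the $\cO(n^3\veps^{-qd/p})$ iid samples via \Cref{fact:cond_boost}, mirroring the structure of \Cref{fact:gkm++:1,fact:gkm++}.

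First I would invoke \Cref{fact:guess-ball} with $A := \Aopt_j$: with probability at least $n^{-3}$, the triple $(y,m,B)$ returned by \guessball simultaneously gives (i) $\phi_{\Aopt_j}(\cbr{\copt_j}) \leq \phi_B(\cbr{y}) \leq (1+2^q)\phi_{\Aopt_j}(\cbr{\copt_j})$, (ii) $m = |\Aopt_j|$, and (iii) $\Delta(y,\copt_j) \leq \phi_{\Aopt_j}(\cbr{\copt_j})/|\Aopt_j|$. Conditioning on this event, set $R := (\phi_B(\cbr{y})/m)^{1/p}$, so \selectball samples uniformly from the $D$-ball $B_D(y,2R)$, and $D(y,\copt_j)\leq R$ by (iii). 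Using \Cref{fact:Cp}, any $c$ with $D(c,\copt_j) \leq \rho$, where $\rho := ((1+\veps)^{1/q}-1)^{q/p}\,(\phi_{\Aopt_j}(\cbr{\copt_j})/|\Aopt_j|)^{1/p}$, satisfies $\psi_{\Aopt_j}(\cbr{c}) \leq (1+\veps)^{1/q}\psi_{\Aopt_j}(\cbr{\copt_j})$, equivalently $\phi_{\Aopt_j}(\cbr{c}) \leq (1+\veps)\phi_{\Aopt_j}(\cbr{\copt_j})$. For $\veps\leq 1$ and $q\geq 1$ the prefactor $((1+\veps)^{1/q}-1)^{q/p}$ lies in $[0,1]$, hence $\rho \leq R$; the triangle inequality for the norm then yields the crucial containment $B_D(\copt_j,\rho) \subseteq B_D(y, 2R)$.

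Finally I would bound the probability that $\hat c$ lands in the inner ball: since every $D$-ball of radius $r$ in $\bbR^d$ has volume proportional to $r^d$ (the unit ball of any norm being a fixed convex body), this probability equals $(\rho/(2R))^d$. Using (i) to replace $\phi_B(\cbr{y})$ by $\phi_{\Aopt_j}(\cbr{\copt_j})$ up to the factor $1+2^q$, and then invoking \Cref{fact:technical:1} to lower bound $(1+\veps)^{1/q}-1$ by a constant multiple of $\veps$, I get $\rho/(2R) \geq C\veps^{q/p}$ for a constant $C$ depending only on $p$, $q$, and the norm. Raising to the $d$th power and combining with the $n^{-3}$ success probability of \guessball gives the asserted $\Omega(n^{-3}\veps^{qd/p})$ bound. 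The main obstacle is the careful bookkeeping of the interleaved $1/p$, $1/q$, and $q/p$ exponents: the target radius $\rho$ must be chosen small enough to guarantee both the $(1+\veps)$ approximation (via \Cref{fact:Cp}) and the containment $\rho\leq R$, yet large enough that the resulting volume ratio carries exactly $qd/p$ powers of $\veps$; everything else is a direct substitution.
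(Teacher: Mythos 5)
Your proof is correct and follows essentially the same route as the paper's own argument: invoke \guessball to obtain $(y,m,B)$ with probability $n^{-3}$, define a small $\dist$-ball of radius $r = ((1+\veps)^{1/q}-1)^{q/p}\,\psi_{\Aopt_j}(\cbr{\copt_j})^{q/p}$ around $\copt_j$ on which \Cref{fact:Cp} guarantees the $\gamma=1+\veps$ bound for cluster $j$, show this ball is contained in the sampling ball $B_\dist(y,2R)$ via the triangle inequality and the guarantees of \Cref{fact:guess-ball}, and bound the hitting probability by the volume ratio $(r/2R)^d$, which \Cref{fact:technical:1} turns into $\Omega(\veps^{qd/p})$. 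Your $\rho$ is algebraically identical to the paper's $r$ (recalling $\psi_A(C)^{q/p}=(\phi_A(C)/|A|)^{1/p}$), so the bookkeeping you flag as the main obstacle is exactly what the paper records, and there is no meaningful divergence in approach.
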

\begin{proof}
  Fix any reference cluster $\Aopt_j$.
  First consider the ball $B_r$ of radius (measured by $\dist$)
  $r:=(\gamma^{1/q} - 1)^{q/p} \psi_{\Aopt_j}(\cbr{{\copt_j}})^{q/p}$ around ${\copt_j}$;
  by \Cref{fact:Cp}, every $z\in B_r$ satisfies
  \[
    \psi_{\Aopt_j}(\cbr{z})
    \ \leq \ \psi_{\Aopt_j}(\cbr{{\copt_j}}) + \psi_{\cbr{z}}(\cbr{{\copt_j}})
    \ = \ \psi_{\Aopt_j}(\cbr{{\copt_j}}) + D(z,{\copt_j})^{p/q}
    \ \leq \ \gamma^{1/q} \psi_{\Aopt_j}(\cbr{{\copt_j}})
  \]
  and in particular $\phi_{\Aopt_j}(\cbr{z})\leq \gamma\phi_{\Aopt_j}(\cbr{{\copt_j}})$,
  meaning \Cref{cond:core:1} holds for this $j\in[k]$ with $\gamma$ for every $z\in B_r$;
  but $j\in [k]$ was arbitrary, so \Cref{cond:core:1} holds with probability exceeding the
  probability of the chosen point $\hat c$ falling within $B_r$.
  Furthermore, note by concavity of $(\cdot)^{1/q}$ the resulting tangent bound $\gamma^{1/q} -1 \leq (\gamma - 1)/q = \eps/q$,
  thus $r \leq \psi_{\Aopt_j}(\cbr{{\copt_j}})^{q/p}$.

  Now consider the triple $(y,m,B)$ returned by \guessball,
  and moreover the ball $B_R$ of radius $R := 2 \del{\phi_B(\cbr{y})/m}^{1/p}$ centered
  at $y$.
  By \Cref{fact:guess-ball} and the above upper bound $r \leq \psi_{\Aopt_j}(\cbr{{\copt_j}})^{q/p}$,
  every $z\in B_r$ satisfies
  \[
    D(z,y)
    \ \leq \
    D(z,\copt_j) + D(\copt_j, y)
    \ \leq \ r + \del{ \phi_{\Aopt_j}(\cbr{\copt_j})/|\Aopt_j| }^{1/p}
    \ \leq \ \psi_{\Aopt_j}(\cbr{\copt_j})^{q/p} + \psi_{\Aopt_j}(\cbr{\copt_j})^{q/p}
    \ \leq \ R \,,
  \]
  meaning $B_r \subseteq B_R$.
  As such, the probability of hitting a point in $B_r$ with a uniform sample from $B_R$
  is the volume ratio of the two balls, which by \Cref{fact:guess-ball} and the
  secant lower bound $\gamma^{1/q} - 1 = (1+\veps)^{1/q} - 1 \geq \veps(2^{1/q} - 1)$ for $\veps\in [0,1]$ (cf.~\Cref{fact:technical:1})
  satisfies
  \[
    \Pr\sbr{ z \in B_r | z \in B_R }
    \ = \ \del{ \frac {r}{R} }^d
    \ \geq \ \del{\frac{(\gamma^{1/q} - 1)^{q/p} \psi_{\Aopt_j}(\cbr{\hat c})^{q/p}}{2(1+2^q)^{1/p} \psi_{\Aopt_j}(\cbr{\hat c})^{q/p}}  }^d
    \ \geq \ \del{\frac {\veps^{q/p}}{2\cdot 3^{q/p}} }^d \,.
  \]
  The result now follows by multiplying this success probability with the $n^{-3}$ success probability for \guessball (cf. \Cref{fact:guess-ball}).
\end{proof}

\begin{proof}[Proof of \Cref{fact:gkm-ball}]
  As with the proof of \Cref{fact:gkm++},
  the proof follows by combining \Cref{fact:gkm-ball:2} with \Cref{fact:cond_boost} and \Cref{cor:main}.
\end{proof}

\end{document}